\documentclass[final,leqno,onefignum,onetabnum]{siamltex1213}


\usepackage{epsfig}
\usepackage{graphicx}

\usepackage{amssymb}
\usepackage{amsmath}

\newtheorem{remark}[theorem]{Remark}

\newcommand\cE{{\cal E}}
\newcommand\cH{{\cal H}}
\newcommand\cG{{\cal G}}
\newcommand\cF{{\cal F}}

\newcommand\cL{{\cal L}}

\newcommand\cK{{\cal K}}
\newcommand\cN{{\cal N}}

\newcommand\cX{{\cal X}}

\newcommand\cV{{\cal V}}

\newcommand\e{\epsilon}

\newcommand\ve{\varepsilon}
\newcommand\ov{\overline}

\def\bbr{{\mathbb R}}
\def\bbn{{\mathbb N}}

\def\text#1{\hbox{#1}}

\def\a{{\bf a}}
\def\B{{\bf B}}

\def\b{{\bf b}}
\def\q{{\bf q}}
\def\r{{\bf r}}
\def\p{{\bf p}}

\def\h{{\bf h}}

\def\g{{\bf g}}
\def\E{{\bf E}}
\def\P{{\bf P}}

\def\C{{\bf C}}

\def\G{{\bf G}}

\def\M{{\bf M}}

\def\U{{\bf U}}

\def\k{{\bf k}}

\newcommand{\wh}{\widehat}
\newcommand{\wt}{\widetilde}

\def\d{\mathrm{d}}
\def\build #1_#2{\mathrel{\mathop{\kern 0pt #1}\limits_{#2}}}
\newcommand{\zs}[1]{{\mathchoice{#1}{#1}{\lower.25ex\hbox{$\scriptstyle#1$}}
{\lower0.25ex\hbox{$\scriptscriptstyle#1$}}}}




\title{Sequential $\delta$-optimal consumption and investment for stochastic
volatility markets with unknown parameters
\thanks{The research is funded by the grant of the Government of Russian Federation 
No.14.A.12.31.0007
and  by the  Russian Science Fondation (research project No. 14-49-00079)
 and the second author is partially supported 
 by the International Laboratory of Statistics of Stochastic Processes and Quantitative Finance of Russian National Research Tomsk State 
 }
}

\author{Belkacem Berdjane\thanks{
D\'epartement de math\'ematiques, universit\'e du Qu\'ebec a Montr\'eal, Qu\'ebec,Canada. \newline
Email:\email{berdjane.belkacem@uqam.ca}. }
\and
Serguei Pergamenshchikov\thanks{
 Laboratoire de Math\'ematiques Raphael Salem,
 Avenue de l'Universit\'e, BP. 12,
  Universit\'e de Rouen,
   F76801, Saint Etienne du Rouvray, Cedex France
and
International Laboratory of Quantitative Finance, National Research University  Higher School of Economics,
   Moscow, Russia. \newline
Email:\email{Serge.Pergamenchtchikov@univ-rouen.fr} }}

\begin{document}
\maketitle

\begin{abstract}
We consider an optimal investment and
consumption problem for a Black-Scholes financial market with stochastic volatility and
unknown stock price appreciation rate. The volatility parameter is driven by an external economic
factor modeled as a diffusion process of Ornstein-Uhlenbeck type with unknown drift.
We use the dynamical programming approach and find an optimal financial strategy which
depends on the drift parameter.
To estimate the drift coefficient we observe the economic factor $Y$ in an interval $[0,T_0]$ for fixed $T_0>0$,
and use sequential estimation. We show that the consumption and investment strategy
calculated through this sequential procedure is $\delta$-optimal.
\end{abstract}

\begin{keywords}
Sequential analysis, Truncate sequential estimate, Black-Scholes market model, Stochastic volatility, Optimal Consumption and Investment, Hamilton-Jacobi-Bellman equation.
\end{keywords}

\begin{AMS}
62L12, 62L20, 91B28, 91G80, 93E20.
\end{AMS}

\pagestyle{myheadings}
\thispagestyle{plain}
\markboth{B. Berdjane and S. Pergamenshchikov}{Sequential $\delta$-optimal consumption and investment}


\section{Introduction}\label{sec:In}

We deal with the finite-time optimal consumption and investment problem in a Black-Scholes financial
market with stochastic volatility (see, e.g., \cite{FouquePapanicoloauSircar2000}).
We consider the same power utility function for both
 consumption and terminal wealth.
The volatility parameter in our situation depends on some economic factor, modeled as a diffusion process of Ornstein-Uhlenbeck type.
The classical approach to this problem goes back to Merton \cite{Merton1971}.

\vspace{1mm}
By applying results from the stochastic control, explicit solutions have been obtained for
financial markets with nonrandom coefficients (see, e.g. \cite{KaratzasShreve1998},
\cite{Korn1997}, \cite{SethiTaksarPresman1992}, \cite{PresmanSethi1991}   and references therein). Since then, the  consumption and investment problems
has been extended in many directions \cite{Rogers2013}.
One of the important generalizations considers financial models with stochastic volatility,
 since empirical studies of stock-price returns show that the estimated volatility
exhibits random characteristics (see e.g., \cite{Rubinstein1985} and
 \cite{JackwerthRubinstein1996}).

\vspace{1mm}
The pure investment problem for such models is considered in \cite{Zariphopoulou2001} and
\cite{Pham2002}. In these papers, authors use the dynamic programming
approach and show that the nonlinear HJB (Hamilton-Jacobi-Bellman) equation can be transformed
into a quasilinear PDE. The similar approach has been used in \cite{KraftSteffensen2006}
for optimal consumption-investment problems with the default risk
for financial markets with non random coefficients. Furthermore, in
\cite{FlemingHernandezHernandez2003}, by making use of the Girsanov measure
transformation the authors study a pure optimal consumption problem
for stochastic volatility markets. In \cite{CastanedaLeyvaHernandezHernandez2005} and
\cite{HernandezHernandezShied2006} the authors use  dual methods.

\vspace{1mm}
Usually, the classical existence and uniqueness theorem
for the HJB equation is shown by the linear PDE methods (see, for example, chapter VI.6
and appendix E in \cite{FlemingRishel1975}). In this paper we use the approach proposed
in \cite{DelongKluppelberg2008}
and used in \cite{BerdjanePergamen2012}. The difference between our work and these two
papers is that, in \cite{DelongKluppelberg2008}, authors consider a pure jump process as
the driven economic factor. The HJB equation in this case is an integro-differential equation
of the first order. In our case it is a highly non linear PDE of the second order.
In \cite{BerdjanePergamen2012} the same problem is considered where the market coefficients are
known, and depend on a diffusion process with bounded parameters. The result therein does not
allow the Gaussian Ornstein-Uhlenbeck process. Similarly to
\cite{DelongKluppelberg2008} and \cite{BerdjanePergamen2012} we study the HJB equation  through
the Feynman - Kac representation. We introduce a special metric space in which the Feynman - Kac
 mapping is a contraction. Taking this into account we show the fixed-point theorem for
this mapping and we show that the fixed-point solution
is the classical unique solution for the HJB equation in our case.

\vspace{1mm}

In the second part of our paper, we consider both the
stock price appreciation rate and the drift of the economic factor to be unknown.
To estimate the drift of a process of Ornstein-Uhlenbeck type
we require sequential analysis methods (see \cite{Novikov1971}
and \cite{LiptserShiryaev2000-II}, Sections 17.5-6).
The drift parameter will be estimated from the observations of the process $Y$,
in some interval $[0,T_0]$. It should be noted that in this case the usual
likelihood estimator for the drift parameter is a nonlinear function of observations
and it is not possible to calculate directly a non-asymptotic upper bound for
its accuracy. To overcome this difficulty we use the truncated sequential estimate from \cite{PergamenKonev92} which enables us a non-asymptotic upper bound for mean accuracy estimation.
 After that we deal with the optimal strategy in the interval $[T_\zs{0},T]$, under the estimated parameter. We show that the expected absolute deviation of the objective function for the given strategy is less
 than some known fixed level $\delta$ i.e. the strategy calculated through the sequential procedure is $\delta$-optimal. Moreover, in this paper we find the explicit form for this level. This allows
 to keep small the deviation of the objective function from the optimal one by controlling the initial endowment.

\vspace{1mm}

The paper is organized as follows: In Sections \ref{sec:Market-Model}-\ref{sec:HJB}
we introduce the market model, state the optimization problem and give the related HJB equation.
Section \ref{sec:usefull definitions} is set for definitions.
The solution of the optimal consumption and investment problem is given in Sections
\ref{sec:Solution-HJB}-\ref{sec:Optimal-Strategy}. In Section \ref{sec:delta-optimal strategy}
we consider unknown the drift parameter $\alpha$ for the economic factor $Y$ and use a truncated
sequential method to construct its estimate $\wh \alpha$.
We obtain an explicit upper for the deviation $\E \, |\wh \alpha- \alpha|$
for any fixed $T_0>0$.
 Moreover considering the optimal
consumption investment problem in the finite interval $[T_0,T]$, we show that the strategy calculated through this truncation procedure is  $\delta$-optimal.
Similar results are given in Section \ref{sec:mu-unknown} when, in addition of using $\wh \alpha$, we consider an estimate $\wh \mu$ of the unknown stock price appreciation rate. A numerical example is given in Section \ref{sec:simulation} and auxiliary results are reported into the appendix.

\section{Market model}\label{sec:Market-Model}

Let $(\Omega, \cF_\zs{T}, (\cF_\zs{t})_\zs{0\le t\le T},\P)$
be a standard and filtered probability space
with two standard independent $(\cF_\zs{t})_\zs{0\le t\le T}$ adapted
Wiener processes $(W_\zs{t})_\zs{0\le t\le T}$ and
$(\U_\zs{t})_\zs{0\le t\le T} $ taking their values in $\bbr$. Our financial market consists of one \emph{riskless} money market account  $(S_\zs{0}(t))_\zs{0\le t\le T}$ and
one \emph{risky stock}
$(S(t))_\zs{0\le t\le T}$ governed by the following equations:
\begin{equation}\label{eq:BS-model}
\left\{\begin{array}{ll}
\d S_\zs{0}(t)&=r\,S_\zs{0}(t)\, \d t\,,\\[5mm]
\d S(t)&=S(t)\mu \, \, \d t+S(t)\,
\,\sigma(Y_\zs{t})\,\d W_\zs{t}\,,
\end{array}\right.
\end{equation}
with $S_\zs{0}(0)=1$ and $S(0)=s >0$.
In this model
$r\,\in\bbr_\zs{+}$ is the {\em riskless bond interest rate},
$\mu$ is the {\em stock price appreciation rate} and
$\sigma(y)$ is {\em stock-volatility}.
For all  $y\in \bbr$
the coefficient $\sigma(y)\in\bbr_+$ is
a nonrandom continuous bounded function and satisfies
$$
\inf_{y \in \bbr} \sigma(y) =\sigma_1 > 0.
$$
We assume also that $\sigma(y)$ is differentiable and has bounded derivative. Moreover
 we assume that the stochastic factor $Y$ valued in $\bbr$
is of Ornstein-Uhlenbeck type. It has a dynamics governed by the
following stochastic differential equation:
\begin{equation}\label{sec:Mm.2}
\d Y_\zs{t}=\alpha Y_{t} \,\d t+\beta\d\U_{t}\,,
\end{equation}
where the initial value $Y_\zs{0}$ is a non random constant,
 $\alpha < 0$ and $\beta>0$ are fixed parameters.
We denote by $(Y^{t,y}_\zs{s})_\zs{s\ge t}$
the process $Y$ starts at $Y_t=y$, i.e.
$$
Y^{t,y}_s= y e^{\alpha (s-t)} + \int_t^s \beta e^{\alpha (s-v)} \, \d \U_v\,.
$$
We note,
that for the model \eqref{eq:BS-model} the risk premium is the
$\bbr \to \bbr\,$ function defined as
\begin{equation}\label{sec:Mm.3}
\theta(y)=\frac{\mu-r}{\sigma(y)}\,.
\end{equation}

For any $t\ge 0$ let  $\check{\phi}_\zs{t}\in\bbr$ denote  the amount of investment into bond and
$\wt{\phi}_\zs{t}\in\bbr$
the amount of investment into risky assets.
We recall  that a
{\em trading strategy} is an $\bbr^{2}$-valued $(\cF_\zs{t})_\zs{0\le t\le T}$-progressively measurable process
$(\check{\phi}_\zs{t},\wt{\phi}_\zs{t})_\zs{0\le t\le T}$ and that
$$
X_\zs{t}\,=\,\check{\phi}_\zs{t}\,S_\zs{0}(t)\,+
\wt{\phi}_\zs{t}\,S(t)\,,\quad t\ge 0\,,
$$
is called the {\em wealth process}.
Moreover, an $(\cF_\zs{t})_\zs{0\le t\le T}$-progressively measurable nonnegative
process $(\varsigma_\zs{t})_\zs{0\le t\le T}$ satisfying for the investment horizon $T>0$
$$
\int^T_0\,\varsigma_\zs{t}\,\d t\,<\,\infty \quad \mbox{a.s.}
$$
is called {\em consumption process}.
The trading strategy $(\check{\phi}_\zs{t},\wt{\phi}_\zs{t})_\zs{0\le t\le T}$ and the consumption process
$(\varsigma_\zs{t})_\zs{0\le t\le T}$ are called {\em self-financing}, if the wealth process
 satisfies the following equation
\begin{equation}\label{2.2-nn1}
X_\zs{t}\,=\,x\,+\,
\int^t_0\,\check{\phi}_\zs{u}\,\d S_\zs{0}(u)\,+
\,\int^t_0\,\wt{\phi}_\zs{u}\,\d S(u)\,-\,
\int^t_0\,\varsigma_\zs{u}\,\d u\,, \quad t\ge 0\,,
\end{equation}
where $x>0$ is the initial endowment.
Similarly to \cite{KluppelbergPergamenchtchikov2009}
we consider the fractional portfolio process $\varphi(t)=\wt{\phi}_\zs{t}S(t)/X_\zs{t}$, i.e. $\varphi(t)$ is the fraction of the wealth process $X_\zs{t}$ invested in the stock at the time $t$. The fraction for the consumption
is denoted by  $c_\zs{t}=\varsigma_\zs{t}/X_\zs{t}$. In this case  the wealth process
satisfies the following stochastic equation
\begin{equation}\label{eq:EDS-X}
 \d X_\zs{t}=X_\zs{t}(r \,+\pi_\zs{t}
\theta(Y_\zs{t})-c_\zs{t})
\, \d t+X_\zs{t}\pi_\zs{t}\, \d W_\zs{t}\,,
\end{equation}
where $\pi_\zs{t}= \sigma(\,Y_\zs{t})\, \varphi_\zs{t}$ and the initial endowment
$X_\zs{0}=x$.

 Now we describe the set of all admissible
strategies. A portfolio control (financial strategy)
$\vartheta=(\vartheta_\zs{t})_\zs{t\ge 0}=((\pi_\zs{t},c_\zs{t}))_\zs{t\ge 0}$
is  said to be {\em admissible} if it is $(\cF_\zs{t})_\zs{0\le t\le T}$ - progressively measurable with values in $\bbr\times [0,\infty)$, such that
\begin{equation}\label{sec:Mm.6}
\|\pi\|_\zs{T}
:=\int^{T}_\zs{0}
|\pi_\zs{t}|^2\, \d t
<\infty
\quad\mbox{and}\quad
\int^{T}_\zs{0}\,c_\zs{t}\, \d t
<\infty
\quad\mbox{a.s.}
\end{equation}
and equation \eqref{eq:EDS-X} has a unique strong a.s. positive continuous solution
$(X^{\vartheta}_\zs{t})_\zs{0\le t\le T}$ on $[0\,,\,T]$.
We denote the set of  {\em admissible portfolios controls} by $\cV$.

In this paper we consider an agent using the power
utility function $x^{\gamma}$ for $0<\gamma<1$.
The goal is to maximize the expected utilities from the consumption
 on the time interval $[T_0,T]$, for fixed $T_0$, and from the  terminal wealth at maturity $T$.
Then for any $x,y \in\bbr$, and $\vartheta\in\cV$ the
 value function is defined by
$$
J(T_0,x,y,\vartheta):=\E_\zs{T_0,x,y}\,\left(\int^T_\zs{T_0}
c_\zs{t}^{\gamma}\,(X^{\vartheta}_\zs{t})^{\gamma}\, \d t\,+\,(X^{\vartheta}_\zs{T})^{\gamma}\right)\,,
$$
were $\E_{T_0,x,y}\,$ is the conditional expectation $\E\,(\,.\,\vert X_{T_0}=x, Y_{T_0}=y)$.
Our goal is to maximize this function, i.e. to calculate
\begin{equation}\label{eq:Optimisation-Problem}
J(T_0,x,y,\vartheta^*)=\sup_\zs{\vartheta\in\cV}\,J(T_0,x,y,\vartheta)\,.
\end{equation}
For the sequel we will use the notations $J^*(T_0,x,y)$ or simply $J_{T_0}^*$ instead of $J(T_0,x,y,\vartheta^*)$.


\begin{remark}\label{Re.sec:Mm.1}
Note that the same problem as \eqref{eq:Optimisation-Problem} is solved in \cite{BerdjanePergamen2012}, but the economic factor $Y$ considered there is a general diffusion process with bounded coefficients. In the present paper $Y$ is an Ornstein-Uhlenbeck process, so the drift is not bounded, but we take advantage of the fact that $Y$ is Gaussian and not correlated to the market, which is not the case in \cite{BerdjanePergamen2012}.
\end{remark}

\section{Hamilton-Jacobi-Bellman equation}\label{sec:HJB}

Now we introduce the HJB equation for the problem
\eqref{eq:Optimisation-Problem}.
To this end, for any two times differentiable $[0,T]\times\bbr_\zs{+}\times\bbr\to\bbr$ function $f$ we  denote by
$D f(t,x,y)$ and $D^{2} f(t,x,y)$ the following vectors of the  partial derivatives i.e.
$$
 D f(t,x,y)=\left( \frac{\partial  }{\partial x} f(t,x,y)\,,\,\frac{\partial  }{\partial y} f(t,x,y)\right)'
$$
and
 $$
 D^{2} f(t,x,y)=\left( \frac{\partial^{2}  }{\partial x^{2}} f(t,x,y)\,,\,\frac{\partial^{2}  }{\partial y^{2}} f(t,x,y)\right)'
\,.
$$
Here the prime denotes the transposition.
Let now  $\q=(\q_1, \q_2)\in\bbr^{2}$, $\M=(\M_1, \M_2)\in\bbr^{2}$ and $\nu=(\nu_1, \nu_2)\in\bbr\times\bbr_\zs{+}$ be fixed parameters.
For these parameters we set
$$
H_\zs{0}(x,y,\q,\M,\nu):=
(r+\nu_\zs{1}\theta(y)-\nu_\zs{2})x\q_\zs{1}+
\alpha y\q_\zs{2}+\,
\frac{1}{2}\,M_\zs{1}\nu^{2}_\zs{1}
x^{2}
+
\frac{\beta^{2}}{2}\,M_\zs{2}
+
(\nu_\zs{2}x)^{\gamma}
\,.
$$
Now   we define the Hamilton function as
$$
H(x,y,\q,\M)\,:=\,\sup_{\nu\in \bbr\times\bbr_\zs{+}}\,H_{0}(x,y,\q,\M,\nu)\,.
$$
Note that, in this case for $x>0$, $\q_\zs{1}>0$ and $\M_\zs{1}<0$
\begin{align}
H(x,y,\q,\M)
&=
x\,r\,\q_\zs{1}+\alpha \,y\q_2
+\frac{1}{q_\zs{*}}
\left(\frac{\gamma}{\q_\zs{1}}\right)^{q_\zs{*}-1}\nonumber
+ \frac{|\theta(y)\q_\zs{1}|^{2}}{2|M_\zs{1}|}
+\frac{\beta^{2}}{2}\, M_\zs{2}\, \label{def:Hamilton-txy}\,,
\end{align}
where $q_\zs{*}=(1-\gamma)^{-1}$. The HJB equation is given by
\begin{equation}\label{eq:HJB-z}
 \left\{
\begin{array}{rl}
&\dfrac{\partial}{\partial t}z(t,x,y)+
H\left(x,y,Dz(t,x,y),D^2 z(t,x,y)\right)=0 \\[5mm]
&z(T,x,y)=x^{\gamma}\,.
\end{array}
\right.
\end{equation}
To study this equation we represent $z(t,x,y)$ as
\begin{equation}\label{sec:HJB.3}
z(t,x,y)=x^{\gamma}h(t,y)\,.
\end{equation}
It is easy to deduce that the function $h$ satisfies the following quasi-linear PDE:
\begin{equation}\label{eq:HJB-h}
\left\{
\begin{array}{rl}
\dfrac{\partial}{\partial t}h(t,y)&+
Q(y)\, h(t,y)
+
\alpha \,y \,\dfrac{\partial }{\partial y}h(t,y)+\dfrac{\beta^{2}}{2} \dfrac{\partial^2 }{\partial y^2}h(t,y)\\[5mm]
&+\dfrac{1}{q_\zs{*}}
\left(\dfrac{1}{h(t,y)}\right)^{q_\zs{*}-1}
=0\,;
\\[5mm]
h(T,y)&=1\,,
\end{array}
\right.
\end{equation}
where
\begin{equation}\label{def:q*_Q}
Q(\,y)=\gamma
\left( r\,+
\frac{\theta^{2}(y)}{2\left(1-\gamma\right)}\right)\,.
\end{equation}
Note that, using the conditions on $\sigma(y)$; the function $Q(y)$ is bounded differentiable and has bounded derivative. Therefore, we can set
\begin{equation}\label{def:Q*-Q_1*}
Q_\zs{*}=\sup_\zs{y\in \bbr}\,Q(\,y)
\quad\mbox{and}\quad
Q_\zs{1}^*=
\sup_\zs{y \in \bbr}\,
|\d Q(y)/\d y|\,.
\end{equation}

Our  goal is to study equation \eqref{eq:HJB-h}.
By making use of the probabilistic representation for
the linear PDE (the Feynman-Kac formula)
 we  show in Proposition~\ref{Pr.sec:Prl.5},  that the solution of this equation is
the fixed-point solution for a special mapping of the integral type
which will be introduced in the next section.

\section{Useful definitions}\label{sec:usefull definitions}

First, to study equation  \eqref{eq:HJB-h}
 we  introduce a special  functional space. Let $\cX$ be the set of uniformly continuous functions on $\cK:=[T_0,T]\times \bbr$ with values in $ [1,\infty)$ such that
\begin{equation}\label{def:norme_in_cX-0}
\|f\|_\infty=
\sup_\zs{(t,y)\in \cK}\,|f(t,y)|\,\le \r^{*}\,,
\end{equation}
where
$\r^{*}\,= \,(\wt T+1)\, e^{Q_\zs{*} \,\wt T}$
and $\wt T= T-T_\zs{0}$.
Now, we define a metric $\varrho_\zs{*}(.,.)$ in $\cX$ as follows: for any $f,g$ in $\cX$
\begin{equation}\label{def:norme_in_cX}
\varrho_\zs{*}(f,g)= \|f-g\|_\zs{*}
 =\sup_\zs{(t,y)\in \cK}
\,e^{-\varkappa(T-t)}\,|f(t,y)-g(t,y)|\,.
\end{equation}
Here $\varkappa=Q_\zs{*}+\zeta+1$ and  $\zeta$ is some positive parameter which will be specified later. We define now the process $\eta$ by its dynamics
\begin{equation}\label{def:EDS-eta}
\d \eta_s=\alpha \, \eta_s \, \d s+\,\beta  \,\d \wt \U_s
\quad \mbox{and}\quad \eta_0=Y_0\,.
\end{equation}
So, $(\eta_t)_\zs{t\ge 0}$ has the same distribution as $(Y_t)_\zs{t\ge 0}$. Here $(\wt \U_t)_\zs{t\ge 0}$ is a standard Brownian motion independent of $(\U_t)_\zs{t\ge 0}$.
Let's now define the $\cX \rightarrow \cX$ Feynman-Kac
mapping $\cL $:
\begin{equation}\label{def:L-theOperator}
\cL_\zs{f}(t,y)= \E\,\cG(t,T,y)
+\frac{1}{q_\zs{*}}
\int_{t}^{T}
\cH_\zs{f}(t,s,y)
\,\d s\,,
\end{equation}
where $\cG(t,s,y)= \exp \left( \int_{t}^{s} Q(\eta_{u}^{t,y})\, \d u\right)$ and
\begin{equation}\label{def:Hf(t,s,y)}
\cH_\zs{f}(t,s,y)=\E
\left(f(s,\eta^{t,y}_{s})\right)^{1-q_\zs{*}}
\cG(t,s,y)\,.
\end{equation}

\noindent and $(\eta_s^{t,y})_{t\le s\le T}$ is the process $\eta$ starting at $\eta_t=y$. To solve the  HJB equation we need to
find the fixed-point solution for the mapping $\cL$ in
$\cX$, i.e.
\begin{equation}\label{sec:Optimal_Strategy.7}
\cL_\zs{h}=h\,.
\end{equation}
To this end we construct
the following  iterated scheme. We set
$h_\zs{0}\equiv 1$
\begin{equation}\label{def:hn-sequence}
h_\zs{n}(t,y)=\cL_\zs{h_\zs{n-1}}(t,y)
\quad\mbox{for}\quad n\ge 1\,.
\end{equation}
and study the convergence of this sequence in $\cK$. Actually, we will use the existence argument of a fixed point, for a contracting operator in a complete metric space.

\section{Solution of the HJB equation}\label{sec:Solution-HJB}
We give in this section the existence and uniqueness result, of a solution for the HJB equation \eqref{eq:HJB-h}. For this, we show some properties of the Feynman-Kac operator
$\cL_\zs{.}$.

\begin{proposition}\label{Pr.sec:PrL.2} The operator $\cL_\zs{.}$ is "stable" in $\cX$ that is
$$\cL_\zs{f}\in\cX, \quad    \forall \,f \in \cX \,.$$
\end{proposition}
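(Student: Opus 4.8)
The plan is to establish the two assertions separately. For the stability claim $\cL_f\in\cX$, I would check the three defining properties of $\cX$: continuity in $(t,y)$, the range $[1,\infty)$, and the sup-bound by $\r^*$. Continuity is routine: $\cG(t,s,y)$ and $\cH_f(t,s,y)$ depend on the Gaussian flow $\eta^{t,y}$, which is jointly continuous in $(t,y)$ in $L^2$, so by dominated convergence (using boundedness of $Q$, hence of $\cG$ on $\cK$, and $f\ge 1$ so that $f^{1-q_*}\le 1$) the integrand is continuous and the integral over $s$ inherits continuity. For the lower bound, since $Q$ may be negative we do not automatically get $\cG\ge 1$; instead I would argue $\cL_f(t,y)\ge \E\,\cG(t,T,y)$ and, more importantly, rely on the structure: actually the cleanest route is to bound $\cL_f\ge \frac1{q_*}\int_t^T \cH_f\,\d s + \E\cG(t,T,y)$ and observe that $h_0\equiv1$ gives $\cL_{h_0}(t,y)=\E\cG(t,T,y)+\frac1{q_*}\int_t^T\E\cG(t,s,y)\,\d s\ge 1$ isn't immediate either — so I expect the intended argument is that $z(t,x,y)=x^\gamma h(t,y)$ must dominate the terminal payoff, equivalently that the Feynman–Kac representation of a PDE with source term $\tfrac1{q_*}(\cdot)^{1-q_*}\ge 0$ and terminal data $1$ yields $h\ge \E\cG(t,T,y)$; and since $Q$ is bounded below, $\E\cG(t,T,y)\ge e^{-|Q|_{\min}\wt T}$... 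The honest statement is that the lower bound $h\ge 1$ needs the sign of the source term plus a Gronwall/Jensen argument, which I would spell out via the iteration $h_n$. For the upper bound, using $f\ge 1\Rightarrow f^{1-q_*}\le1$ and $\cG(t,s,y)\le e^{Q_*(s-t)}\le e^{Q_*\wt T}$, one gets $\cL_f(t,y)\le e^{Q_*\wt T}+\frac1{q_*}\int_t^T e^{Q_*\wt T}\,\d s\le e^{Q_*\wt T}(1+\wt T)=\r^*$, using $q_*\ge1$.

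For the regularity claim $\cL_f\in\C^{1,2}(\cK)$, the idea is to recognize $\cL_f$ as (a combination of) solutions of \emph{linear} parabolic problems via the Feynman–Kac formula. Specifically, $w(t,y):=\E\,\cG(t,T,y)$ solves the linear PDE $w_t+\alpha y\,\D_y w+\tfrac{\beta^2}{2}\D_{y,y}w+Q(y)w=0$ with $w(T,y)=1$, and each slice $v_s(t,y):=\cH_f(s\text{-shifted})$ solves an analogous inhomogeneous-terminal linear problem on $[t,s]$. Because the generator $\cA=\alpha y\,\partial_y+\tfrac{\beta^2}2\partial_{yy}$ is uniformly elliptic (since $\beta>0$) with smooth, at-most-linearly-growing coefficients, and since $Q$ and $\sigma$ (hence $\theta$, hence the terminal/source data) are bounded with bounded derivatives, standard interior Schauder / parabolic regularity theory gives that these linear solutions are $\C^{1,2}$; one then differentiates under the $\int_t^T\d s$ and under $\E$, justified by the Gaussian density bounds for $\eta^{t,y}_s$, to conclude $\cL_f\in\C^{1,2}(\cK)$. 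An alternative, more self-contained route is to write $\eta^{t,y}_s=ye^{\alpha(s-t)}+\xi_{t,s}$ with $\xi_{t,s}$ a centered Gaussian independent of $y$, so that every expectation becomes an explicit Gaussian convolution integral in $y$; differentiating in $y$ hits only the smooth bounded functions $Q$ and $f$-composed terms through the shift $ye^{\alpha(s-t)}$, and differentiability in $t$ follows from the smooth dependence of the variance $\mathrm{Var}(\xi_{t,s})$ and the $\int_t^T$ limits.

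The main obstacle I anticipate is the regularity step, specifically justifying two derivatives in $y$ passing through both the expectation and the $s$-integral \emph{uniformly} enough to land in $\C^{1,2}(\cK)$ up to the parabolic boundary, and handling the mild singularity of $\cH_f(t,s,y)$ as $s\downarrow t$ (where $\eta^{t,y}_s\to y$ and the Gaussian smoothing degenerates). I would control this by splitting $\int_t^T = \int_t^{t+\epsilon}+\int_{t+\epsilon}^T$: on $[t+\epsilon,T]$ the Gaussian density of $\eta^{t,y}_s$ has variance bounded below, giving uniform derivative bounds; the $[t,t+\epsilon]$ piece is uniformly small because the integrand is bounded (by $\r^*$-type constants, using $f\ge1$), and its contribution to derivatives is handled by first differentiating the representation $\cH_f(t,s,y)=\E(f(s,ye^{\alpha(s-t)}+\xi_{t,s}))^{1-q_*}e^{\int Q}$ in $y$ \emph{before} integrating in $s$, where the $y$-derivative only differentiates the bounded-$\C^1$ data and the smooth exponent. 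The time-regularity near $t=s$ and the matching of the terminal condition as $t\uparrow T$ (where $\cH_f(t,T,y)\to f(T,y)^{1-q_*}$, bounded) are then straightforward. The secondary obstacle — the lower bound $\cL_f\ge1$ — I would dispatch cleanly by induction on the scheme $h_n=\cL_{h_{n-1}}$: $h_0\equiv1$, and if $h_{n-1}\ge1$ then $\cH_{h_{n-1}}\ge0$, while $\E\cG(t,T,y)=\E\exp(\int_t^T Q(\eta_u)\,\d u)\ge \exp(\E\int_t^T Q(\eta_u)\,\d u)$ by Jensen is not quite $\ge1$; the genuinely correct argument is that $\cL_f(t,y)$ represents, via Feynman–Kac, the solution at $(t,y)$ of a linear parabolic equation with terminal data $1$ and \emph{nonnegative} zeroth-order data contribution from the source, which by the maximum principle for that linear operator is $\ge$ the solution with zero source and data $1$, namely $\E\cG(t,T,y)\ge \min(1,\ldots)$ — so in fact I would simply invoke: the function $\tilde h(t,y)=\E\cG(t,T,y)+\frac1{q_*}\int_t^T\cH_f\,\d s$ solves a linear problem whose terminal value is $1$ and whose zeroth-order term $Q$ together with the nonnegative source forces $\tilde h\ge 1$ on $\cK$ by the parabolic comparison principle, since the constant function $1$ is a subsolution.
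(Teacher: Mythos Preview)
Your proposal is largely on track but contains one genuine miss and one unnecessary detour, both of which the paper's proof handles more simply.

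\textbf{Lower bound $\cL_f\ge 1$.} You write ``since $Q$ may be negative we do not automatically get $\cG\ge 1$'' and then struggle through Jensen, induction, and comparison principles. But look at the definition \eqref{def:q*_Q}: $Q(y)=\gamma\bigl(r+\tfrac{|\theta(y)|^2}{2(1-\gamma)}\bigr)$ with $r\ge 0$ and $0<\gamma<1$, so $Q(y)\ge 0$ for all $y$. Hence $\cG(t,s,y)\ge 1$ pathwise, $\E\,\cG(t,T,y)\ge 1$, and the nonnegative integral term only helps. This is the ``obviously $\cL_f\ge 1$'' that the paper asserts without further comment.

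\textbf{Regularity $\cL_f\in\C^{1,2}$.} Your plan --- split $\cL_f$ into the terminal piece $w$ and the family $v_s=\cH_f(t,s,\cdot)$, then differentiate under $\E$ and under $\int_t^T\d s$, and fight the degeneracy as $s\downarrow t$ --- is workable but laborious, and the obstacle you anticipate (loss of Gaussian smoothing at $s=t$) is real in that approach. The paper sidesteps it entirely: it writes down the \emph{single} linear PDE
\[
u_t+Q(y)u+\alpha y\,\D_y u+\tfrac{\beta^2}{2}\D_{y,y}u+\tfrac{1}{q_*}f(t,y)^{1-q_*}=0,\qquad u(T,\cdot)=1,
\]
invokes a classical existence/uniqueness theorem for uniformly parabolic equations (Theorem~5.1 in Lady\v{z}enskaja--Solonnikov--Ural'ceva) to obtain a $\C^{1,2}$ solution $u$, and \emph{then} applies It\^o's formula to $u(s,\eta^{t,y}_s)\exp\bigl(\int_t^s Q(\eta^{t,y}_v)\,\d v\bigr)$ to identify $u=\cL_f$. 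The regularity is thus imported wholesale from the PDE theorem, and the Feynman--Kac step is only an identification --- no differentiation under the integral, no $s\downarrow t$ singularity. Your ``Option~1'' is close in spirit, but by breaking $\cL_f$ into pieces you create exactly the technical difficulty that the paper's monolithic PDE formulation avoids.
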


\begin{proof}
Obviously, that for any $f\in\cX$ we have $\cL_\zs{f} \ge 1$. Moreover,
 setting
\begin{equation}\label{sec:PrL.1}
\wt{f}_\zs{s}=f(s,\eta^{t,y}_\zs{s})
\,,
\end{equation}
we represent $\cL_\zs{f}(t,y)$ as
\begin{equation}\label{sec:PrL.2}
\cL_\zs{f}(t,y) =
\E\,\cG(t,T,y)
+\frac{1}{q_\zs{*}}
\, \int_{t}^{T}
\E \left(
\wt{f}_\zs{s}\right) ^{1-q_\zs{*}}\,
\cG(t,s,y) ds\,.
\end{equation}
Therefore, taking into account that $\wt{f}_\zs{s}\ge 1$
and $q_\zs{*}\ge 1$ we get
\begin{equation}\label{sec:PrL.3}
\cL_\zs{f}(t,y)
\,
\le
\,
e^{Q_\zs{*}(T-t)}+\int_{t}^{T}\dfrac{1}{q_\zs{*}}\,
e^{Q_\zs{*}(s-t)}\, \d s
\,\le \,\r^{*}\,\,,
\end{equation}
where the upper bound $\r^{*}\,$ is defined in \eqref{def:norme_in_cX-0}.
Now we have to  show that $\cL_f$ is a uniformly continuous function on $\cK$ for any $f\in\cX$. For any $f \in \cX \bigcap C^{1,1} (\cK)$ we consider equation \eqref{eq:HJB-h}, i.e.
\begin{equation}\label{sec:PrL.4}
\left\{
\begin{array}{rl}
\dfrac{\partial}{\partial t}u(t,y)&+
Q(y) \, u(t,y)
+
\, \alpha \,y \,\dfrac{\partial}{\partial y}u(t,y)\\[5mm]
&
+\dfrac{\beta^{2}}{2}\,\dfrac{\partial^{2}}{\partial y^{2}}u(t,y)
+\dfrac{1}{q_\zs{*}}
\left(\dfrac{1}{f(t,y)}\right)^{q_\zs{*}-1}
=0\,;
\\[5mm]
u(T,y)&=1\,.
\end{array}
\right.
\end{equation}
Setting here $\wt{u}(t,y)=u(T_0+T-t,y)$ we obtain a uniformly parabolic equation
for $\wt{u}$ with initial condition $\wt{u}(T_0,y)=1$. Moreover, we know that $Q$ has bounded
derivative. We deduce that for any $f$ from $\cX \bigcap C^{1,1}(\cK)$, Theorem 5.1 from \cite{LadyzenskajaSolonnikovUralceva1967} (p. 320)
with  $0<l<1$  provides the existence of the unique solution of \eqref{sec:PrL.4}
belonging to $\C^{1,2}(\cK)$. Applying the It\^o formula to the process
$$
\left(u(s,\eta^{t,y}_\zs{s})\,
e^{\int^{s}_\zs{t} Q(\eta^{t,y}_\zs{v})\, \d v}\right)_\zs{t\le s\le T}
$$
 and taking into account equation \eqref{sec:PrL.4} we get
\begin{equation}\label{eq:u=Lf}
u(t,y)=\cL_\zs{f}(t,y)\,.
\end{equation}

\noindent
Therefore, the function
$\cL_\zs{f}(t,y)\in \C^{1,2}(\cK)$, i.e. $\cL_\zs{f}\in\cX$
for any $f\in\cX \bigcap C^{1,1}(\cK)$.

\medskip
\noindent Moreover, for any $f\in \cX$  there exists a sequence $(f_n)_{n\ge 1}$ from $\cX \bigcap C^{1,1}(\cK)$ such that
$$
\sup_{(t,y) \in \cK}|f_n(t,y)-f(t,y)| \to 0 \quad \mbox{as} \quad n \to \infty \,.
$$
This implies
$$
\sup_{(t,y) \in \cK}|\cL_{f_n}(t,y)-\cL_f(t,y)| \to 0 \quad \mbox{as} \quad n \to \infty \,.
$$
So $\cL_f(t,y)$ is uniformly continuous on $\cK$ i.e. $\cL_f \in \cX$. Hence Proposition~\ref{Pr.sec:PrL.2}.
\end{proof}

%
%


\bigskip

\begin{proposition}\label{Pr.sec:Prl.3}
The mapping $\cL$ is a contraction in the metric space
$(\cX , \varrho_\zs{*})$, i.e.
for any $f$, $g$ from $\cX$
\begin{equation}\label{sec:PrL.6}
\varrho_\zs{*}(\cL_\zs{f},\cL_\zs{g})
\le \lambda \varrho_\zs{*}(f,g)\,,
\end{equation}
where 
\begin{equation}\label{def:lambda}
\lambda=\frac{1}{\zeta + 1}, \quad \zeta >0 \,.
\end{equation}\
\end{proposition}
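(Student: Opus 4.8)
The plan is to estimate the difference $\cL_\zs{f}(t,y)-\cL_\zs{g}(t,y)$ directly from the integral representation \eqref{sec:PrL.2}. Since the leading term $\E\,\cG(t,T,y)$ does not depend on $f$, it cancels, and we are left with
$$
\cL_\zs{f}(t,y)-\cL_\zs{g}(t,y)
=\frac{1}{q_\zs{*}}\int_{t}^{T}
\left(\E\left(\wt f_\zs{s}\right)^{1-q_\zs{*}}-\E\left(\wt g_\zs{s}\right)^{1-q_\zs{*}}\right)
\cG(t,s,y)\,\d s\,,
$$
with $\wt f_\zs{s}=f(s,\eta^{t,y}_\zs{s})$ and $\wt g_\zs{s}=g(s,\eta^{t,y}_\zs{s})$ as in \eqref{sec:PrL.1}. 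The key analytic point is a Lipschitz-type bound for the map $w\mapsto w^{1-q_\zs{*}}$ on $[1,\infty)$: since $q_\zs{*}\ge 1$, the exponent $1-q_\zs{*}\le 0$, so this function has derivative $(1-q_\zs{*})w^{-q_\zs{*}}$ bounded in absolute value by $q_\zs{*}-1$ on $[1,\infty)$ (indeed $w^{-q_\zs{*}}\le 1$ there). By the mean value theorem, for $a,b\ge 1$,
$$
\left|a^{1-q_\zs{*}}-b^{1-q_\zs{*}}\right|\le (q_\zs{*}-1)\,|a-b|\,.
$$
Applying this pointwise inside the expectation with $a=\wt f_\zs{s}$, $b=\wt g_\zs{s}$ and using $|\wt f_\zs{s}-\wt g_\zs{s}|\le \|f-g\|_\zs{\infty}$ (or better, the weighted bound below) gives control of the integrand.

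Next I would insert the weight defining $\varrho_\zs{*}$. From the definition \eqref{def:norme_in_cX}, $|f(s,y')-g(s,y')|\le e^{\varkappa(T-s)}\,\varrho_\zs{*}(f,g)$ for every $(s,y')\in\cK$, hence $|\wt f_\zs{s}-\wt g_\zs{s}|\le e^{\varkappa(T-s)}\varrho_\zs{*}(f,g)$ almost surely. Combining with the Lipschitz bound and with $\cG(t,s,y)\le e^{Q_\zs{*}(s-t)}$ (from $Q\le Q_\zs{*}$) and Jensen/monotonicity of expectation, we obtain
$$
\left|\cL_\zs{f}(t,y)-\cL_\zs{g}(t,y)\right|
\le \frac{q_\zs{*}-1}{q_\zs{*}}\,\varrho_\zs{*}(f,g)\int_{t}^{T}
e^{\varkappa(T-s)}\,e^{Q_\zs{*}(s-t)}\,\d s
\le \varrho_\zs{*}(f,g)\int_{t}^{T}
e^{\varkappa(T-s)}\,e^{Q_\zs{*}(s-t)}\,\d s\,,
$$
using $(q_\zs{*}-1)/q_\zs{*}<1$. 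It remains to multiply by $e^{-\varkappa(T-t)}$ and take the supremum over $(t,y)\in\cK$. Writing $e^{Q_\zs{*}(s-t)}=e^{Q_\zs{*}(T-t)}e^{-Q_\zs{*}(T-s)}$ and factoring, the integral becomes
$$
e^{-\varkappa(T-t)}\int_{t}^{T}e^{\varkappa(T-s)}e^{Q_\zs{*}(s-t)}\,\d s
= e^{(Q_\zs{*}-\varkappa)(T-t)}\int_{t}^{T}e^{(\varkappa-Q_\zs{*})(T-s)}\,\d s
= \frac{1-e^{-(\varkappa-Q_\zs{*})(T-t)}}{\varkappa-Q_\zs{*}}\le\frac{1}{\varkappa-Q_\zs{*}}\,.
$$
By the choice \eqref{def:varkappa}, $\varkappa-Q_\zs{*}=\zeta+1$, so the whole bound is $\varrho_\zs{*}(f,g)/(\zeta+1)=\lambda\,\varrho_\zs{*}(f,g)$ with $\lambda=1/(\zeta+1)<1$, which is \eqref{sec:PrL.6} with $\lambda$ as in \eqref{def:lambda}.

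The computations here are all routine; the only place demanding a little care is the Lipschitz estimate for $w\mapsto w^{1-q_\zs{*}}$, specifically verifying that the relevant derivative is bounded by $q_\zs{*}-1$ on the range $[1,\infty)$ where both $\wt f_\zs{s}$ and $\wt g_\zs{s}$ live — this is exactly why the codomain of functions in $\cX$ was taken to be $[1,\infty)$ rather than $(0,\infty)$, and it is the structural reason the contraction works. The rest is exponential bookkeeping driven by the definition of $\varkappa$, which was evidently engineered precisely to make the final constant equal $1/(\zeta+1)$.
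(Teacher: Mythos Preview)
Your proof is correct and follows essentially the same route as the paper: cancel the $f$-independent term, use the mean-value bound $|a^{1-q_\zs{*}}-b^{1-q_\zs{*}}|\le (q_\zs{*}-1)|a-b|$ on $[1,\infty)$ (the paper writes the resulting constant $(q_\zs{*}-1)/q_\zs{*}$ as $\gamma$), bound $\cG\le e^{Q_\zs{*}(s-t)}$, insert the weighted estimate \eqref{sec:PrL.7}, and integrate to obtain $1/(\varkappa-Q_\zs{*})=1/(\zeta+1)$. Your exponential bookkeeping in the last display is in fact more explicit than the paper's.
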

Actually, as shown in Corollary \ref{Corolaire:super-geometric-rate}, an appropriate choice of $\zeta$ gives a super-geometric convergence rate for the sequence $(h_\zs{n})_\zs{n\ge 1}$ defined in \eqref{def:hn-sequence}, to the limit function $h(t,y)$, which is the fixed point of the operator $\cL$.

\bigskip
\begin{proof}
First note that,
for any $f$ and $g$ from $\cX$ and for any $y\in\bbr$
\begin{align*}
|\cL_\zs{f}(t,y)-\cL_\zs{g}(t,y)|
&\le
\frac{1}{q_\zs{*}}\,
 \E\,
\int_{t}^{T}
\cG(t,s,y)
\left\vert
\left(\wt{f}_\zs{s}\right) ^{1-q_\zs{*}}
-
\left(\wt{g}_\zs{s}\right) ^{1-q_\zs{*}}
\right\vert
\, \d s\\[2mm]
&\le
\gamma
\,
\E \,\int^{T}_\zs{t}\,\,\cG(t,s,y)\,\left\vert \wt{f}_\zs{s}-\wt{g}_\zs{s}\right\vert\, \d s  \,.
\end{align*}
We recall that  $\wt{f}_\zs{s}=f(s,\eta^{t,y}_\zs{s})$
and $\wt{g}_\zs{s}=g(s,\eta^{t,y}_\zs{s})$. Taking into account here that
$\cG(t,s,y)\le e^{Q_\zs{*}(s-t)}$ we obtain
$$
|\cL_\zs{f}(t,y)-\cL_\zs{g}(t,y)|
\le
\,
\int^{T}_\zs{t}\,e^{Q_\zs{*}(s-t)}\,\E
|\wt{f}_\zs{s}-\wt{g}_\zs{s}|
\, \d s\,.
$$
Taking into account in the last inequality, that
\begin{equation}\label{sec:PrL.7}
|
\wt{f}_\zs{s}-\wt{g}_\zs{s}
|
\le \,e^{\varkappa (T-s)}\,\varrho_\zs{*}(f,g)
\quad\mbox{a.s.}
\,,
\end{equation}
we get for  all $(t,y)$ in $\cK$
\begin{equation}\label{sec:PrL.8}
\left\vert e^{-\varkappa(T-t)}
\left( \cL_\zs{f}(t,y)-\cL_\zs{g}(t,y)\right) \right\vert
\le
\frac{1}{\varkappa-Q_\zs{*}}\,\varrho_\zs{*}(f,g)
\,.
\end{equation}
Taking into account the definition of $\varkappa$ in
\eqref{def:norme_in_cX},
we obtain inequality \eqref{sec:PrL.6}.
Hence Proposition~\ref{Pr.sec:Prl.3}. \end{proof}

\begin{proposition}\label{Pr.sec:Prl.4}
The fixed point equation $\cL_\zs{h}=h$ has a unique solution in $\cX$.
\end{proposition}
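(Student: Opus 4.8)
The plan is to derive the statement from the Banach fixed-point theorem, the only missing ingredient being the completeness of the metric space $(\cX,\varrho_\zs{*})$. Propositions~\ref{Pr.sec:PrL.2} and~\ref{Pr.sec:Prl.3} already supply the other two hypotheses: $\cL$ maps $\cX$ into itself, and it is a $\varrho_\zs{*}$-contraction with ratio $\lambda=1/(\zeta+1)\in(0,1)$. Once completeness is established, the contraction mapping principle yields a unique $h\in\cX$ with $\cL_\zs{h}=h$, and moreover the iterates $(h_\zs{n})_\zs{n\ge1}$ of \eqref{def:hn-sequence} converge to $h$ in $\varrho_\zs{*}$ (with the usual a priori bound $\varrho_\zs{*}(h_\zs{n},h)\le\lambda^{n}(1-\lambda)^{-1}\varrho_\zs{*}(h_\zs{1},h_\zs{0})$).

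First I would record that on the strip $\cK=[T_\zs{0},T]\times\bbr$ one has $0\le T-t\le T-T_\zs{0}$, hence $e^{-\varkappa(T-T_\zs{0})}\le e^{-\varkappa(T-t)}\le1$. Therefore $\varrho_\zs{*}$ is equivalent to the metric induced by the sup-norm $\|\cdot\|_\infty$:
\begin{equation*}
e^{-\varkappa(T-T_\zs{0})}\,\|f-g\|_\infty\le\varrho_\zs{*}(f,g)\le\|f-g\|_\infty\,,\qquad f,g\in\cX\,.
\end{equation*}
Consequently the two metrics have the same Cauchy sequences and the same convergent sequences with the same limits; in particular $(\cX,\varrho_\zs{*})$ is complete if and only if $(\cX,\|\cdot\|_\infty)$ is.

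Next I would check that $\cX$ is a closed subset of the Banach space of bounded continuous functions on $\cK$ equipped with $\|\cdot\|_\infty$. If $f_\zs{n}\in\cX$ and $f_\zs{n}\to f$ uniformly on $\cK$, then $f$ is continuous as a uniform limit of continuous functions, and the pointwise inequalities $f_\zs{n}(t,y)\ge1$ and $\sup_\zs{\cK}|f_\zs{n}|\le\r^{*}$ pass to the limit, so that $f(t,y)\ge1$ and $\|f\|_\infty\le\r^{*}$, i.e. $f\in\cX$. Since a closed subset of a complete metric space is complete, $(\cX,\|\cdot\|_\infty)$ is complete, hence so is $(\cX,\varrho_\zs{*})$.

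The only delicate point is this completeness statement: one must make sure the two constraints defining $\cX$ (values in $[1,\infty)$, and $\|\cdot\|_\infty\le\r^{*}$) are stable under uniform limits, and that passing from $\|\cdot\|_\infty$ to the weighted metric $\varrho_\zs{*}$ on the strip $\cK$ — which is unbounded in the $y$ variable — does not destroy completeness; both are taken care of by the two observations above. With $(\cX,\varrho_\zs{*})$ complete, Proposition~\ref{Pr.sec:Prl.4} is exactly the Banach fixed-point theorem applied to $\cL$, and I expect no further obstacle.
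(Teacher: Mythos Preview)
Your proposal is correct and follows essentially the same approach as the paper: establish completeness of $(\cX,\varrho_\zs{*})$ via the equivalence of $\varrho_\zs{*}$ and $\|\cdot\|_\infty$ on the strip $\cK$, and then apply the contraction mapping principle. The only cosmetic difference is that the paper writes out the Banach fixed-point argument explicitly through the iterates $(h_\zs{n})$ rather than invoking the theorem by name, and it is somewhat terser about completeness; your version actually fills in the closedness check more carefully.
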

\begin{proof}
Indeed, using the contraction of the operator $\cL$ in
$\cX$ and the definition of the sequence $(h_\zs{n})_\zs{n\ge 1}$
in \eqref{def:hn-sequence} we get, that for any $n\ge 1$
\begin{equation}\label{sec:PrL.10}
\varrho_\zs{*}(h_\zs{n},h_\zs{n-1})\le \lambda^{n-1}\,
\varrho_\zs{*}(h_\zs{1},h_\zs{0})\,,
\end{equation}
i.e. the sequence $(h_\zs{n})_\zs{n\ge 1}$ is fundamental in $(\cX,\varrho_\zs{*})$.
The metric space $(\cX,\varrho_\zs{*})$ is complete since it is included in the Banach space $C^{0,0} (\cK)$, and $\|.\|_\infty$  is equivalent to $\|.\|_*$ defined in \eqref{def:norme_in_cX}.
Therefore, this sequence has a limit in $\cX$, i.e. there exits a function $h$ from $\cX$ for which
$$
\lim_\zs{n\to\infty}\,\varrho_\zs{*}(h,h_\zs{n})=0\,.
$$
Moreover, taking into account that
$h_\zs{n}=\cL_\zs{h_\zs{n-1}}$ we obtain, that
for any $n\ge 1$
$$
\varrho_\zs{*}(h,\cL_\zs{h})
\le
\varrho_\zs{*}(h,h_\zs{n})
+
\varrho_\zs{*}(\cL_\zs{h_\zs{n-1}},\cL_\zs{h})
\le \varrho_\zs{*}(h,h_\zs{n})
+\lambda
\varrho_\zs{*}(h,h_\zs{n-1})\,.
$$
The last expression tends to zero as $n\to\infty$. Therefore
$\varrho_\zs{*}(h,\cL_\zs{h})=0$, i.e. $h=\cL_\zs{h}$ .
Proposition~\ref{Pr.sec:Prl.3} implies immediately that this solution is unique.
\end{proof}

\bigskip
\noindent We are ready to state the result about the solution of the HJB equation.
\begin{proposition}\label{Pr.sec:Prl.5}
The HJB equation \eqref{eq:HJB-h} has a unique solution which is the solution
$h$ of the fixed-point problem $\cL_\zs{h}=h$.
\end{proposition}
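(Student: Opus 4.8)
The plan is to combine the three preceding propositions with the equivalence between the fixed-point equation $\cL_\zs{h}=h$ and the PDE \eqref{eq:HJB-h}. First I would invoke Proposition~\ref{Pr.sec:Prl.4} to obtain a unique $h\in\cX$ with $\cL_\zs{h}=h$. The key observation is that this identity is self-referential in exactly the right way: since $h\in\cX$, Proposition~\ref{Pr.sec:PrL.2} applies with $f=h$, so that $\cL_\zs{h}\in\C^{1,2}(\cK)$, and by the fixed-point property $h=\cL_\zs{h}$, hence $h\in\C^{1,2}(\cK)$. Thus the candidate solution automatically has the regularity required to be a classical solution.

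Next I would show $h$ solves \eqref{eq:HJB-h}. Looking back at the proof of Proposition~\ref{Pr.sec:PrL.2}, for the particular choice $f=h$ the linear parabolic problem \eqref{sec:PrL.4} becomes, after substituting $f=h$ in its source term, precisely the equation
$$
u_\zs{t}+Q(y)u+\alpha\,\D_\zs{y}u+\frac{\beta^{2}}{2}\D_\zs{y,y}u+\frac{1}{q_\zs{*}}\left(\frac{1}{h(t,y)}\right)^{q_\zs{*}-1}=0,\qquad u(T,y)=1,
$$
whose unique $\C^{1,2}(\cK)$ solution $u$ satisfies $u=\cL_\zs{h}$ by \eqref{eq:u=Lf}. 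But $\cL_\zs{h}=h$, so $u=h$, and therefore $h$ itself satisfies this equation. Since this is exactly \eqref{eq:HJB-h} (note that in \eqref{eq:HJB-h} the drift term is written $\alpha y\,\D_\zs{y}h$ while \eqref{sec:PrL.4} writes $\alpha\,\D_\zs{y}u$ — here I would be careful to reconcile the two displays, as they must agree for the argument to go through), we conclude $h$ is a classical solution of the HJB equation \eqref{eq:HJB-h}.

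For uniqueness among classical solutions, I would argue in the reverse direction: let $\wt h$ be any $\C^{1,2}(\cK)$ solution of \eqref{eq:HJB-h} lying in $\cX$. Applying the It\^o formula to $\left(\wt h(s,\eta^{t,y}_\zs{s})\exp(\int_\zs{t}^{s}Q(\eta^{t,y}_\zs{v})\,\d v)\right)_\zs{t\le s\le T}$, exactly as in the derivation of \eqref{eq:u=Lf}, and using that $\wt h$ satisfies \eqref{eq:HJB-h}, one obtains $\wt h=\cL_\zs{\wt h}$; that is, $\wt h$ is a fixed point of $\cL$ in $\cX$. By the uniqueness in Proposition~\ref{Pr.sec:Prl.4} (itself a consequence of the contraction Proposition~\ref{Pr.sec:Prl.3}), $\wt h=h$.

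The main obstacle is not any single estimate but rather the bookkeeping needed to make the It\^o-formula step rigorous: one must check that the local martingale part $\int_\zs{t}^{s}\D_\zs{y}\wt h(v,\eta^{t,y}_\zs{v})\,e^{\int_\zs{t}^{v}Q}\,\beta\,\d\wt\U_\zs{v}$ is a genuine (zero-mean) martingale, which uses the boundedness of $Q$, the bound $\|\wt h\|_\infty\le\r^{*}$ from membership in $\cX$, the boundedness of $\D_\zs{y}\wt h$ on the compact-in-$t$ strip (from the $\C^{1,2}$ regularity and standard parabolic interior estimates), and the Gaussian moment bounds for the Ornstein-Uhlenbeck process $\eta^{t,y}$. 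Granting this integrability, taking expectations kills the martingale term and the stochastic Fubini theorem rearranges the remaining integrals into precisely the definition \eqref{def:L-theOperator} of $\cL_\zs{\wt h}$; I expect this to be essentially the same computation already carried out implicitly in the proof of Proposition~\ref{Pr.sec:PrL.2}, so the statement follows with little extra work.
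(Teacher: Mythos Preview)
Your proposal is correct and follows essentially the same route as the paper: take the unique fixed point $h$ from Proposition~\ref{Pr.sec:Prl.4}, set $f=h$ in the linear problem \eqref{sec:PrL.4}, and use the identification \eqref{eq:u=Lf} to conclude $u=\cL_\zs{h}=h$ solves \eqref{eq:HJB-h}. Your treatment of uniqueness (any classical solution in $\cX$ is, via It\^o, a fixed point of $\cL$) is more explicit than the paper's one-line assertion, and your observation about the drift term ($\alpha y$ versus $\alpha$) correctly flags a typographical slip in \eqref{sec:PrL.4}.
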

\begin{proof}
First, note that in view of Lemma \ref{lemma_4.19*}, the function $\cL_h(t,y)$ is $1/2$-Hl\"{o}lderian with respect to $t$ on $|y|<n$ for any $n \ge 1$. Therefor, choosing in \eqref{sec:PrL.4} the function $f=f_n(t,y)=u(t,\wt y_n)$ (where $\wt y_n$ is the projection of $y$ into $[-n,n]$) we obtain through Theorem 5.1 from \cite{LadyzenskajaSolonnikovUralceva1967} (p. 320) and Lemma \ref{lemma_4.19*}, that the equation \eqref{sec:PrL.4} has a unique solution $u_n(t,y)$. It is clear that the function
$$
u(t,y)=\sum_{n \ge 1} u_n (t,y) \, \textbf{1}_{\{n-1<|y|\le n\}}
$$
is the solution to equation \eqref{sec:PrL.4} for $f=u(t,y)$. Taking into account
the representation \eqref{eq:u=Lf} and the fixed point equation $\cL_\zs{h}=h\,$
we obtain, that the solution of equation \eqref{sec:PrL.4}
$$
u=\cL_\zs{h}=h\,.
$$
Therefore, the function $h$ satisfies equation \eqref{eq:HJB-h}.
Moreover, this solution is unique since $h$ is the unique solution of the fixed point problem.

Choosing in \eqref{sec:PrL.4} the function $f=u$
and taking into account
the representation \eqref{eq:u=Lf} and the fixed point equation $\cL_\zs{h}=h\,$
we obtain, that the solution of equation \eqref{sec:PrL.4}
$$
u=\cL_\zs{h}=h\,.
$$
Therefore, the function $h$ satisfies equation \eqref{eq:HJB-h}.
Moreover, this solution is unique since $h$ is the unique solution
of the fixed point problem.
\end{proof}

\begin{remark}
\begin{enumerate}
  \item We can find in \cite{MaProtterYong1994} an existence and uniquness proof for a more general quasilinear equation but therein, authors did not give a way to calculate this solution, whereas in our case, the solution is the fixed point function for the Feynman-Kac operator.  Moreover our method allows to obtain the super geometric convergence rate for the sequence approximating the solution, which is a very important property in practice. In \cite{Delarue2002} author shows an existence and uniquness result where  the global result is deduced from a local existence and uniqueness theorem.
  \item The application of contraction mapping or fixed-point theorem to solve nonlinear PDE in not new see, e.g. \cite{Freidlin1985} and \cite{Pazy1983} where the term "generalised solution" is used for quasilinear/semilinear PDE, and the fixed point of the Feynamn-Kac representation is discussed.
\end{enumerate}
\end{remark}
\section{Super-geometric convergence rate}\label{sec:geometric_rate}
For the sequence $(h_n)_{n \ge 1}$ defined in \eqref{def:hn-sequence}, and $h$ the fixed point solution for $h=\cL_h$, we study the behavior of the deviation
$$
\Delta_\zs{n}(t,y)=h(t,y)-h_\zs{n}(t,y)\,.
$$

\noindent In the following theorem
we make an appropriate choice of $\zeta$ for the contraction parameter $\lambda$ to get the super-geometric convergence rate for the sequence $(h_\zs{n})_\zs{n\ge 1}$.

\begin{theorem}\label{Th.sec:Mr.1}
The fixed point problem $\cL_h=h$ admits a unique solution $h$ in $\cX$
such that for any $n\ge 1$ and $\zeta>0$
\begin{equation}\label{sec:Mr.10}
\sup_\zs{(t,y)\in \cK}
\,
|\Delta_\zs{n}(t,y)|
\le \B^{*}\,\lambda^{n}\,,
\end{equation}
where
$\B^{*}=e^{\varkappa \wt T}\,(1+\r^{*})/(1-\lambda)$
and $\varkappa$ is given
in \eqref{def:norme_in_cX}.
\end{theorem}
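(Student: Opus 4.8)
The plan is to exploit the two facts already established: that $\cL$ is a contraction with constant $\lambda = 1/(\zeta+1)$ on the complete metric space $(\cX,\varrho_\zs{*})$ (Proposition~\ref{Pr.sec:Prl.3}), and that the iterates $(h_\zs{n})$ converge to the unique fixed point $h$ in this metric (Proposition~\ref{Pr.sec:Prl.4}). Existence and uniqueness of $h$ is therefore immediate; the only real content of the theorem is the explicit bound \eqref{sec:Mr.10}. First I would obtain, from the contraction estimate \eqref{sec:PrL.10}, the telescoping bound $\varrho_\zs{*}(h,h_\zs{n}) \le \sum_{k\ge n}\varrho_\zs{*}(h_\zs{k+1},h_\zs{k}) \le \sum_{k\ge n}\lambda^{k}\,\varrho_\zs{*}(h_\zs{1},h_\zs{0}) = \frac{\lambda^{n}}{1-\lambda}\,\varrho_\zs{*}(h_\zs{1},h_\zs{0})$, which is the standard a posteriori estimate for a Banach fixed point.

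Next I would control the initial deviation $\varrho_\zs{*}(h_\zs{1},h_\zs{0})$. Since $h_\zs{0}\equiv 1$ and $h_\zs{1}=\cL_\zs{h_\zs{0}}$, both functions lie in $\cX$, so by \eqref{def:norme_in_cX-0} they are bounded by $\r^{*}$ in $\|\cdot\|_\infty$, hence $\|h_\zs{1}-h_\zs{0}\|_\infty \le 1 + \r^{*}$. Converting from the $\|\cdot\|_\infty$ norm to the weighted norm $\|\cdot\|_\zs{*}$ only decreases the value (the weight $e^{-\varkappa(T-t)}\le 1$), so $\varrho_\zs{*}(h_\zs{1},h_\zs{0}) \le 1+\r^{*}$. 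Combining, $\varrho_\zs{*}(h,h_\zs{n}) \le \frac{(1+\r^{*})\lambda^{n}}{1-\lambda}$. Finally, to pass from the weighted metric back to the supremum norm in \eqref{sec:Mr.10}, note that $|\Delta_\zs{n}(t,y)| = e^{\varkappa(T-t)}\,e^{-\varkappa(T-t)}|\Delta_\zs{n}(t,y)| \le e^{\varkappa(T-t)}\,\|h-h_\zs{n}\|_\zs{*}$, and since $T-t\le \wt T$ on $\cK$, the prefactor is at most $e^{\varkappa\wt T}$; this yields $\sup_{(t,y)\in\cK}|\Delta_\zs{n}(t,y)| \le e^{\varkappa\wt T}\frac{1+\r^{*}}{1-\lambda}\lambda^{n} = \B^{*}\lambda^{n}$, as claimed.

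There is essentially no hard step here — the argument is the textbook a posteriori/a priori error bound for a contraction, assembled from results already proved. The only points requiring a little care are bookkeeping: making sure the telescoping sum of geometric terms produces exactly the factor $1/(1-\lambda)$, and being careful about the direction of the norm equivalence ($\|\cdot\|_\zs{*}\le\|\cdot\|_\infty$ but $\|\cdot\|_\infty \le e^{\varkappa\wt T}\|\cdot\|_\zs{*}$) so that the weight factors land with the right sign and give the stated constant $\B^{*}=e^{\varkappa\wt T}(1+\r^{*})/(1-\lambda)$. The claim that "$\lambda^n$ gives a super-geometric rate" is not part of this theorem's statement (it is deferred to the corollary) and would simply follow by choosing $\zeta$ growing with $n$, but I would not address it in this proof.
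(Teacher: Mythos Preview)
Your proposal is correct and matches the paper's proof essentially step for step: existence/uniqueness from Proposition~\ref{Pr.sec:Prl.4}, the geometric-series bound $\varrho_\zs{*}(h,h_\zs{n})\le \frac{\lambda^{n}}{1-\lambda}\varrho_\zs{*}(h_\zs{1},h_\zs{0})$ from \eqref{sec:PrL.10}, the estimate $\varrho_\zs{*}(h_\zs{1},h_\zs{0})\le 1+\r^{*}$ from membership in $\cX$, and the conversion $\|\cdot\|_\infty\le e^{\varkappa\wt T}\|\cdot\|_\zs{*}$. The only cosmetic difference is that you spell out the telescoping sum explicitly, whereas the paper simply asserts the resulting inequality.
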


\begin{proof}
Proposition~\ref{Pr.sec:Prl.4} implies
the first part of this theorem. Moreover, from \eqref{sec:PrL.10}
it is easy to see, that
for each $n\ge 1$
$$
\varrho_\zs{*}(h,h_\zs{n})\le \frac{\lambda^{n}}{1-\lambda}\,
\varrho_\zs{*}(h_\zs{1},h_\zs{0})\,.
$$
Thanks to Proposition~\ref{Pr.sec:PrL.2} all the functions
$h_\zs{n}$ belong to $\cX$, i.e.
by the definition of the space $\cX$
$$
\varrho_\zs{*}(h_\zs{1},h_\zs{0})\le
\sup_\zs{(t,y)\in\cK}\,|h_\zs{1}(t,y)-1| \le 1+\r^{*}\,.
$$
Taking into account that
$$
\sup_\zs{(t,y)\in \cK}
\,
|\Delta_\zs{n}(t,y)|
\le e^{\varkappa \wt T}
\varrho_\zs{*}(h,h_\zs{n})\,,
$$
we obtain the inequality \eqref{sec:Mr.10}. Hence
Theorem~\ref{Th.sec:Mr.1}.
\end{proof}

\vspace{3mm}
\noindent Now we can minimize the upper bound \eqref{sec:Mr.10} over $\zeta>0$. Indeed,

$$
\B^{*}\,\lambda^{n}=\C^{*}\,
\exp\{\g_\zs{n}( \zeta)\}\,,
$$
where
$\C^{*}=(1+\r^{*}) \,e^{(Q_\zs{*}+1) \wt T}$
and
$$
\g_\zs{n}(x)=x\,\wt T\,-\ln x-(n-1)\ln(1+x)\,.
$$
Now we minimize this function over $x>0$, i.e.
$$
\min_\zs{x>0} \g_\zs{n}(x)
=
x^{*}_\zs{n}\, \wt T\,-
\ln x^{*}_\zs{n}-(n-1)\ln(1+x^{*}_\zs{n})
\,,
$$
where
$$
x^{*}_\zs{n}=\frac{\sqrt{(\,\wt T\,-n)^2+4\,\wt T\,}+n-\,\wt T\,}{2\,\wt T\,}\,.
$$
Therefore, for
$$
\zeta=\zeta^{*}_\zs{n}=x^{*}_\zs{n}
$$
 we obtain  the optimal upper bound \eqref{sec:Mr.10}.
\begin{corollary}\label{Corolaire:super-geometric-rate}
The fixed point problem has a unique solution $h$ in $\cX$
such that for any $n\ge 1$
\begin{equation}\label{sec:Mr.11}
\sup_\zs{(t,y)\in \cK}
\,
|\Delta_\zs{n}(t,y)|
\,
\le \U^{*}_\zs{n}\,,
\end{equation}
where
$\U^{*}_\zs{n}=\C^{*}\,\exp\{\g^{*}_\zs{n}\}$. Moreover one can check directly that for any $0<\delta<1$
$$
\U^{*}_\zs{n}=\mbox{O}(n^{-\delta n})\quad\mbox{as}\quad n\to\infty\,.
$$
This means that the convergence rate is more rapid than any
geometric one, i.e. it is  super-geometric.
\end{corollary}

\section{Known parameters}\label{sec:Optimal-Strategy}
We consider our optimal consumption and investment problem in the case of markets with known parameters. The next theorem is the analogous of theorem 3.4 in \cite{BerdjanePergamen2012}. The main difference between the two results is that the drift coefficient of the process $Y$ in  \cite{BerdjanePergamen2012} must be bounded and so does not allow the Ornstein-Uhlenbeck process. Moreover the economic factor $Y$ is correlated to the market by the Brownian motion $\U$, which is not the case in the present paper, since we consider the process $\U$ independent of $W$.

\begin{theorem}\label{Theorem_main_1} The optimal value of $J(T_0,x,y,\vartheta)$ for the optimization problem \eqref{eq:Optimisation-Problem}
is given by
$$
J_{T_0}^*=J(T_0,x,y,\vartheta^*)=\sup_\zs{\vartheta\in\cV}\,J(T_0,x,y,\vartheta)=x^\gamma \,h(T_0,y)
$$
where $h(t,y)$ is the unique solution of equation \eqref{eq:HJB-h}.
Moreover, for all $T_0\le t\le T$ an optimal financial strategy
$\vartheta^{*}=(\pi^{*},c^{*})$ is  of the form
\begin{equation}\label{eq:pi*_and_c*-opt-strategy}
\left\{
\begin{array}{rl}
\pi^{*}_\zs{t}=\pi^{*}(Y_\zs{t})
&=\,\dfrac{\theta(Y_\zs{t})}{1-\gamma}\,;
 \\[5mm]
c^{*}_\zs{t}=
c^{*}(t,Y_\zs{t})
&=
\left(h(t,Y_\zs{t})\right)^{-q_\zs{*}}\,.
\end{array}
\right.
\end{equation}
The optimal wealth process $(X^{*}_t)_{T_0\le t\le T}$ satisfies
the following stochastic equation
\begin{equation}\label{eq:EDS_X*}
\d X^{*}_\zs{t}\,=\a^{*}(t,Y_\zs{t})
X^{*}_\zs{t}\, \d t+X^{*}_\zs{t} \b^{*}(Y_\zs{t}) \,\d W_\zs{t}\,, \quad
\  X^{*}_\zs{T_0}\,=\,x\,,
\end{equation}
where
\begin{equation}\label{eq:a*_and_b*}
\left\{
\begin{array}{rl}
\a^{*}(t,y)&=
\dfrac{|\theta(y)|^2}{1-\gamma}+ r\,-\left( h(t,y)\right)^{-q_\zs{*}}\,; \\[4mm]
\b^{*}(y)&=
\dfrac{ \theta(y)}{1-\gamma}\,.
\end{array}
\right.
\end{equation}
The solution $X_t^*$ can be written as
\begin{equation}\label{def:X*_s}
X^{*}_\zs{s}=X^{*}_\zs{t}\,e^{\int^{s}_\zs{t} \a^{*}(v,Y_\zs{v})\, \d v}\,\cE_\zs{t,s}\,,
\end{equation}
where
$
\cE_\zs{t,s}=\exp\left\{\int^{s}_\zs{t} \b^{*}(Y_\zs{v})\, \d W_\zs{v}-
\frac{1}{2}\int^{s}_\zs{t} |\b^{*}(Y_\zs{v})|^{2}\, \d v\right\}\,.
$
\end{theorem}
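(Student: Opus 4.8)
The plan is a classical verification argument built on the $\C^{1,2}$ solution of the HJB equation already obtained above. Set $z(t,x,y)=x^{\gamma}h(t,y)$, where $h$ is the fixed--point solution from Proposition~\ref{Pr.sec:Prl.5}; by the reduction \eqref{sec:HJB.3} this $z$ lies in $\C^{1,2}(\cK)$ and solves \eqref{eq:HJB-z} with $z(T,x,y)=x^{\gamma}$. The first thing I would record is the purely algebraic fact that the Hamiltonian $H$ in \eqref{def:Hamilton-txy} is exactly the pointwise supremum over $(\pi,c)\in\bbr\times[0,\infty)$ of
$$
x(r+\pi\theta(y)-c)\,\q_\zs{1}+\alpha\,y\,\q_\zs{2}+\tfrac12\beta^{2}\M_\zs{22}+\tfrac12\pi^{2}x^{2}\M_\zs{11}+c^{\gamma}x^{\gamma},
$$
valid because along $z=x^{\gamma}h$ one has $\q_\zs{1}=\D_\zs{x}z>0$ and $\M_\zs{11}=\D^{2}_\zs{x,x}z<0$; the maximizers are $\pi^{\#}=-\theta(y)\q_\zs{1}/(x\M_\zs{11})$ and $c^{\#}=(\gamma x^{\gamma-1}/\q_\zs{1})^{q_\zs{*}}$, which upon substituting $\q_\zs{1}=\gamma x^{\gamma-1}h$, $\M_\zs{11}=\gamma(\gamma-1)x^{\gamma-2}h$ collapse precisely to \eqref{eq:pi*_and_c*-opt-strategy}. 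Because $W$ and $\U$ are independent the generator of $(X^{\vartheta},Y)$ carries no $\D^{2}_\zs{x,y}z$ term, which is why only $\M_\zs{11}$ and $\M_\zs{22}$ survive.

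Next, for an arbitrary $\vartheta=(\pi,c)\in\cV$ I would apply It\^o's formula to $z(s,X^{\vartheta}_\zs{s},Y_\zs{s})$ on $[T_\zs{0},T\wedge\tau_\zs{n}]$, with $(\tau_\zs{n})$ a localizing sequence increasing to $T$ that makes the $\d W$-- and $\d\U$--stochastic integrals true martingales and keeps $X^{\vartheta}$ bounded up to $\tau_\zs{n}$ (admissibility gives $X^{\vartheta}>0$, so $z$ is well defined and nonnegative along the path). Adding $\int c_\zs{s}^{\gamma}(X^{\vartheta}_\zs{s})^{\gamma}\d s$ and using the generator bound "(drift term)$\,+\,c_\zs{s}^{\gamma}(X^{\vartheta}_\zs{s})^{\gamma}\le z_\zs{t}+H(\cdots)=0$" together with $\E(\text{martingale part})=0$ yields
$$
z(T_\zs{0},x,y)\ \ge\ \E\Big[z(T\wedge\tau_\zs{n},X^{\vartheta}_\zs{T\wedge\tau_\zs{n}},Y_\zs{T\wedge\tau_\zs{n}})+\int_{T_\zs{0}}^{T\wedge\tau_\zs{n}}c_\zs{s}^{\gamma}(X^{\vartheta}_\zs{s})^{\gamma}\,\d s\Big].
$$
Letting $n\to\infty$, Fatou's lemma for the nonnegative quantity $z\ge 0$, monotone convergence for the running integral, and the terminal condition $z(T,\cdot,\cdot)=x^{\gamma}$ give $z(T_\zs{0},x,y)\ge J(T_\zs{0},x,y,\vartheta)$; since $\vartheta$ is arbitrary, $z(T_\zs{0},x,y)\ge J^{*}_\zs{T_\zs{0}}$.

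For the matching lower bound I would substitute the candidate $\vartheta^{*}$ of \eqref{eq:pi*_and_c*-opt-strategy} into \eqref{eq:EDS-X}: the drift and diffusion coefficients become $\a^{*}$ and $\b^{*}$ of \eqref{eq:a*_and_b*}, so \eqref{eq:EDS_X*} is a linear SDE whose unique strong, a.s.\ positive solution is the explicit Dol\'eans--Dade exponential \eqref{def:X*_s}; boundedness of $\theta$ and of $h\ge 1$ then verify \eqref{sec:Mm.6}, so $\vartheta^{*}\in\cV$. Along $\vartheta^{*}$ the generator bound is an equality by the choice of maximizers, so the displayed inequality becomes an identity for every $n$; the explicit representation \eqref{def:X*_s} and the bound $\a^{*}(t,y)\le |\theta|_\zs{\infty}^{2}/(1-\gamma)+r$ provide finite (exponential) moments of $X^{*}_\zs{s}$, hence the uniform integrability of $z(T\wedge\tau_\zs{n},X^{*}_\zs{T\wedge\tau_\zs{n}},Y_\zs{T\wedge\tau_\zs{n}})$ required to pass to the limit, giving $z(T_\zs{0},x,y)=J(T_\zs{0},x,y,\vartheta^{*})$. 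Combining the two bounds yields $J^{*}_\zs{T_\zs{0}}=x^{\gamma}h(T_\zs{0},y)$, attained by $\vartheta^{*}$, and \eqref{def:X*_s} is the standard solution formula for the linear equation \eqref{eq:EDS_X*}. The main obstacle is exactly this last integrability step: general admissible strategies carry no a priori moment control, so one must keep the inequality one--sided (Fatou only) in the generic estimate and exploit the explicit solution \eqref{def:X*_s} — controlling $\E(X^{*}_\zs{s})^{\gamma}$ through $\cE_\zs{t,s}$ and the upper bound on $\a^{*}$ — to legitimate the reverse passage to the limit along $\vartheta^{*}$.
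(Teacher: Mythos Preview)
Your verification argument is correct and is exactly the standard route: identify $z=x^{\gamma}h$ as a $\C^{1,2}$ solution of the HJB equation, read off the pointwise maximizers of the Hamiltonian, localize It\^o's formula to get the supermartingale inequality for arbitrary $\vartheta\in\cV$, and then use the explicit exponential form \eqref{def:X*_s} (bounded $\theta$, $h\ge 1$) to upgrade to equality along $\vartheta^{*}$. The paper itself omits the proof entirely, referring to Theorem~3.4 of \cite{BerdjanePergamen2012}, whose argument is precisely this verification scheme; so your proposal is in line with the intended proof. One incidental remark: the Hamiltonian \eqref{def:Hamilton-txy} as printed carries $\alpha\,\q_\zs{2}$ rather than $\alpha\,y\,\q_\zs{2}$, but your generator with the Ornstein--Uhlenbeck drift $\alpha y$ is the correct one (consistent with \eqref{eq:HJB-h}).
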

\vspace{2mm}
\noindent The proof of the theorem follows the same arguments, as Theorem 3.4 in \cite{BerdjanePergamen2012}, so it is omitted.

\section{Unknown parameters}\label{sec:delta-optimal strategy}

In this section we consider the Black-Scholes market with unknown stock price appreciation rate $\mu$ and the unknown drift parameter $\alpha$ of the economic factor $Y$. We observe the process $Y$ in the interval $[0,T_0]$, and use sequential methods to estimate the drift. After that, we will deal with the consumption-investment optimization problem on the finite interval $[T_0,T]$ and look for the behavior of the optimal value function $J^*(T_0,x,y)$ under the estimated parameters. We define the value function $\wh J^*_{T_0}$ the estimate of $J^*_{T_0}$
\begin{eqnarray}\label{def:wh J*}
\wh J^*_{T_0}&:=&\E_{T_0} \,\left(\int^T_\zs{T_0}
(\wh c^*_\zs{t})^{\gamma}\,(\wh X^{*}_\zs{t})^{\gamma}\, \d t\,+\,(\wh X^{*}_\zs{T})^{\gamma} \, \right).
\end{eqnarray}
$\E_{T_0}$ is the conditional expectation $ \E(\,.\,\vert \cF_{T_0})$. $\wh X^{*}_\zs{t}$ is a simplified notation for $ X^{\wh \vartheta_*}_\zs{t}$ and from \ref{def:X*_s} we write
\begin{equation}\label{def:wh-X*_s}
\wh X^{*}_\zs{s}=\wh X^{*}_\zs{t}\,e^{\int^{s}_\zs{t} \wh \a^{*}(v,Y_\zs{v})\, \d v}\,\wh \cE_\zs{t,s}\,,
\end{equation}
where
$
\wh \cE_\zs{t,s}=\exp\left\{\int^{s}_\zs{t} \wh \b^{*}(Y_\zs{v})\, \d W_\zs{v}-
\frac{1}{2}\int^{s}_\zs{t} |\wh \b^{*}(Y_\zs{v})|^{2}\, \d v\right\}\,.
$
The functions $\a^{*}(t,y)$ and $\b^{*}(t,y)$ are defined as
\begin{equation}\label{eq:wh-a*_and_wh-b*}
\left\{
\begin{array}{rl}
\wh \a^{*}(t,y)&=
\dfrac{|\wh \theta(y)|^2}{1-\gamma}+ r\,-\left( \wh h(t,y)\right)^{-q_\zs{*}}\,; \\[4mm]
\wh \b^{*}(y)&=
\dfrac{ \wh \theta(y)}{1-\gamma}\,, \quad
\wh \theta(y)=\dfrac{\wh \mu-r}{\sigma(y)}\,.
\end{array}
\right.
\end{equation}
The estimated consumption process is $\wh c^*_\zs{t}=\wh c^{*}(t,Y_\zs{t})=\left(\wh h(t,Y_\zs{t})\right)^{-q_\zs{*}}\,$ and $\wh h(t,y)$ is the unique solution for $h=\wh \cL_{h}$. The operator $\wh \cL$ is defined by:
\begin{equation}\label{def:wh-L-theOperator}
\wh \cL_\zs{f}(t,y)= \E\,\wh \cG(t,T,y)
+\frac{1}{q_\zs{*}}
\int_{t}^{T}
\E \left( (f(s,\wh \eta_s^{t,y}))^{1-q_*}\,\wh \cG(t,s,y)\right)
\,\d s\,,
\end{equation}
where $\wh \cG(t,s,y)= \exp \left( \int_{t}^{s} \wh Q(\wh \eta_{u}^{t,y})\, \d u\right)$. The process $(\wh \eta^{t,y}_s)_\zs{ t\le  s\le T}$ has the following dynamics:
\begin{equation}\label{def:EDS-wh-eta}
\d \wh \eta^{t,y}_\zs{s}=
\wh \alpha \wh \eta^{t,y}_\zs{s} \, \d s +\beta \, \d \wt \U_\zs{s}, \quad \quad
\wh \eta^{t,y}_\zs{t}=y\,.
\end{equation}
Here $\wh \alpha$ and $\wh \mu$ are some estimates for the parameters $\alpha$ and $\mu$ which will be specified
later.

\subsection{Sequential procedure}
We assume the unknown parameter $\alpha$ taking values in some bounded interval $[\alpha_2,\alpha_1]$, with $\alpha_2<\alpha<\alpha_1 < 0$. 
We define
$\wh \alpha$ as the projection onto the interval $[\alpha_2,\alpha_1]$ of the sequential estimate $\alpha^*$.
\begin{equation}\label{def:alpha_trunc_seq_est}
\wh \alpha=Proj_{[\alpha_2,\alpha_1]} \alpha^*, \quad\quad \alpha^*= \left( \frac{\int_0^{\tau_H} Y_t \, \d Y_t}{H} \right) 1_{\{ \tau_H \le T_0 \}}
\end{equation}
where $\tau_H=\inf \left\{ t \ge 0, \int_0^{t } Y_s^2 \, \d s \ge H \right\}$.
Furthermore, 
we introduce  the function $\e \, (\,.\,)$, which will serve later for the $\delta$-optimality:

\begin{equation}\label{def:e(T0)}
\e \,(T_0)=\sqrt{\dfrac{\beta^2}{H}+\dfrac{\alpha_2^2 }{\beta^{12}}\left( \dfrac{\k(3) }{T_0^{2}} \right)}.
\end{equation}
Here $H=\beta_2\,(T_0-T^{\ve }_0)$, $\beta_2=\beta^2/2 |\alpha_2|$, $\ve  =5/6$ and
\begin{equation*}
\k(m)=3^{2 m-1}\,\left( Y_0^{2 m}+(1+(m (2 m-1))^m\,( 2 \,\beta)^{2m})\,\k_1 (m) \right),
\end{equation*}
with $\k_1 (m)=2^{2 m-1} \left( Y_0^{2 m}+(2m-1)!! \,\beta_1^m \right) $ and  $\beta_1=\beta^2/2 |\alpha_1|$.
The proposition bellow gives $ \wh \alpha$ the truncated sequential estimate of $\alpha$ and gives a bound for the expected deviation
$\E |\ov \alpha$, where $\ov \alpha =\wh \alpha- \alpha$.

\begin{proposition}\label{prop:sequential_alpha}
For any $0<T_\zs{0}<T$
$$
\E |\ov \alpha| \le \e \,(T_0)\,.
$$

\end{proposition}

\begin{proof}
Note first that $\E |\ov \alpha| \le \E |\alpha^*-\alpha|$, so it is enough to show that $\E |\alpha^* -\alpha| \le \e \,(T_0)$. Moreover, we know from \cite{LiptserShiryaev2000-II} chapter 17, that the maximum likelihood estimate of $\alpha$ is given by
$$
\frac{\int_0^{T_0} Y_t \, \d Y_t}{\int_0^{T_0 } Y_t^2 \, \d t }\,.
$$
To estimate $\alpha$ we use the sequential maximum likelihood estimate proposed in
\cite{LiptserShiryaev2000-II} and \cite{Novikov1971}
$$
\wt \alpha=\frac{\int_0^{\tau_H} Y_t \, \d Y_t}{\int_0^{\tau_H } Y_t^2 \, \d t }=\alpha+\beta \,\frac{\int_0^{\tau_H} Y_t \, \d \U_t}{H}\,.
$$
Taking into account that  $\int_0^{\infty } Y_t^2 \, \d t =+\infty$ a.s.,
we obtain that  $\wt \alpha  \rightsquigarrow \cN (\alpha,  \beta^2/H )$ and hence
$
\E \,|\wt \alpha-\alpha|^2 = \beta^2/H.
$
 The problem with the previous estimate is that $\tau_H$ may be greater than $T_0$. To overcome this difficulty
 we use the truncated modification of the sequential estimate $\wh \alpha$ from \cite{PergamenKonev92}, i.e. $\alpha^* =\wt \alpha \, 1_{\{ \tau_H \le T_0 \}}$. We observe that
\begin{eqnarray*}
 \alpha^* -\alpha&=&(\alpha^*-\alpha) 1_{\{ \tau_H \le T_0 \}}+(\alpha^*-\alpha) 1_{\{ \tau_H > T_0 \}}\\[2mm]
&=& \beta \,\frac{\int_0^{\tau_H} Y_t \, \d \U_t}{H} 1_{\{ \tau_H \le T_0 \}}- \alpha 1_{\{ \tau_H > T_0 \}}\,.
\end{eqnarray*}
So
\begin{eqnarray}\label{equ:inequalite-E(wh alpha-alpha)2}
\E( \alpha^* -\alpha)^2 &=\dfrac{\beta^2}{H^2}&\E\left(  \int_0^{\tau_H} Y_t \, \d \U_t  \,  1_{( \tau_H \le T_0 )}\right)^2+ \alpha^2 \,\P( \tau_H > T_0 )\nonumber\\
&\le& \frac{\beta^2}{H^2}\,\E \left(  \int_0^{\tau_H} Y_t \, \d \U_t \right)^2+\alpha^2 \P( \tau_H > T_0 )\nonumber\\[2mm]
&\le& \frac{\beta^2}{H}+\alpha^2 \,\P( \int_0^{T_0} Y_t^2 dt < H )\,.
\end{eqnarray}
Moreover, by the It\^o formula
$$
\d Y_t^2=2 Y_t \, \d Y_t+\beta^2 \, \d t=(2\alpha Y_t^2+\beta^2) \,\d t+2\beta Y_t \, \d \U_t\,.
$$
From there we deduce that
$$
\int_0^{T_0}(2\alpha Y_t^2+\beta^2)\, \d t= Y_{T_0}^2-Y_0^2-2 \beta \int_0^{T_0} Y_t \, \d \U_t \,.
$$
Taking into account that $\alpha_2 \le \alpha \le \alpha_1 <0$ and using the Markov's inequality, we get for any integer $m > 0$
\begin{eqnarray*}
 \P \left( \int_0^{T_0} Y_t^2 dt < H \right)&=&\P \left( \int_0^{T_0} (2\alpha Y_t^2+\beta^2)dt > 2 \alpha H+\beta^2 T_0 \right)\\
&=&\P \left( Y_{T_0}^2-Y_0^2- 2\beta \int_0^{T_0} Y_t \, \d \U_t >2 \alpha H+\beta^2\, T_0 )\right)\\
&\le&\frac{\E \left( Y_{T_0}^2-Y_0^2-2 \beta \int_0^{T_0} Y_t \, \d \U_t \right)^{2 m } }{(2 \alpha_2 H+\beta^2 T_0)^{2 m }}\,.
\end{eqnarray*}
Here $2 \alpha H+\beta^2 \, T_0 >0$, ie: $0< H < \beta_2\, T_0$. For the stochastic integral $\xi_t=\int_{0}^t \beta e^{\alpha (t-v)} \, \d \U_v$ we get for any  $m \in \bbn_*$
$$
\E(\xi_t^{2 m})=(2m-1)!! \,[\E(\xi^2_t)]^m \le (2m-1)!! \, \beta^m_1.
$$
Furthermore, in view of $Y_{T_0}=Y_0\, e^{\alpha\,T_0}+\xi_\zs{T_0}$ we obtain
$$
\E Y_{T_0}^{2 m} \le 2^{2 m-1} \left( \E(Y_0 e^{\alpha T_0})^{2 m}+ \E (\xi_\zs{T_0}^{2 m}) \right)
\le \k_1 (m)\,.
$$
Moreover, we have (see e.g. \cite{LiptserShiryaev2000-I} Lemma 4.12):
$$
 \E \left(\int_0^{T_0} Y_t \, \d \U_t \right)^{2 m} \le (m (2 m-1))^m \, T_0^{m-1} \int_0^{T_0} \E Y_s^{2 m} \, \d s
\le  \k_2 (m) \, T^m_0\,.
$$
where $\k_2 (m)=(m (2 m-1))^m \, \k_1(m)$ . We conclude that
\begin{eqnarray*}
 \P \left( \int_0^{T_0} Y_t^2 dt < H \right)&\le&\frac{3^{2 m-1}\,\left( Y_0^{2 m}+\k_1 (m)+ ( 2 \,\beta)^{2m}\, \k_2(m)\, T_0^m \right)}{(2 \alpha_2 H+\beta^2 T_0)^{2 \, m }} \,.
\end{eqnarray*}
We set $H=\beta_2 \, (T_0-T_0^\ve )$ for some $\ve $, we obtain
\begin{eqnarray*}
 \P \left( \int_0^{T_0} Y_t^2 dt < H \right)&\le& \dfrac{1 }{(\beta^2)^{2 m}}\left( \dfrac{\k (m) }{T_0^{ m \,(2\,\ve -1)}} \right)\,.
\end{eqnarray*}
Replacement in \eqref{equ:inequalite-E(wh alpha-alpha)2} gives
$$
\E \, (\alpha^*-\alpha)^2 \le \dfrac{\beta^2}{\beta_2\,(T_0-T^{\ve }_0)}+\dfrac{\alpha^2 }{\beta^{4 \,m}}\left( \dfrac{\k (m) }{T_0^{ m \,(2\,\ve -1)}} \right)\,.
$$
We fixe $\ve =5/6$ and $m=3$ so that $ m \,(2 \, \ve -1) = 2$,  which gives $\e^2 \,(T_0)$ and then the desired result.
\end{proof}

\subsection{Known stock price appreciation rate $\mu$}

We consider in this section the consumption-investment problem for markets with known $\mu$ and unknown $\alpha$, i.e. in this case
$\wh \mu=\mu$ and, therefore,
$\wh \theta(y)=\theta(y)$
in \eqref{eq:wh-a*_and_wh-b*}. To state the approximation result we set
\begin{equation}\label{def:h_T-Gamma_T}
\left\{
\begin{array}{rl}
\h_{1}&=\dfrac{1+2\,\gamma+\zeta_0}{1+\zeta_0} \,\dfrac{\wt T}{|\alpha_1|}\,\left(2\, Q_1^*\wt T  + \gamma\, h_1^* \, \right)  , \\[4mm]
\Gamma \,&=\,\left(q_*\,\wt T(\wt d )^\gamma+ \, (\wt T+1)\, \left(\sqrt{\wt c q_* }\right)^\gamma \right)
\dfrac{1}{\varkappa^\gamma} \,e^{\gamma \,\varkappa \,\wt T}\,.
\end{array}
\right.
\end{equation}
Here $\zeta_0 >0$,  $\wt c=4\, \wt T\, e^{c_0 \wt T} \wt d^2$,
$
c_0 =2 \, \sup_{(s,y)\in \cK} (|\a^{*}(s,y)|^2+|\b^{*}(s)|^2).
$
Moreover, $\wt d\,$ is the upper bound \eqref{def:wt d} and
\begin{equation}\label{def:h*-1}
h_1^*\,=\left( \wt T Q_\zs{1}^*+\frac{Q_1^* \wt T^2}{q_\zs{*}}\right) \,e^{Q_\zs{*}\,\wt  T}\\
+\frac{3}{q_\zs{*}}\,\sqrt{\dfrac{2 |\alpha_2|}{\beta^2(1-e^{2 \alpha_2})}} e^{Q_*\,\wt  T}\, \wt T\,.
\end{equation}

\vspace{3mm}
\noindent We notice that int the estimation interval $[0,T_0]$, we don't invest in the risky stock. We chose the strategy $(c_t,\pi_t)=(r,0)$ for $0\le t \le T_0$, so that $\forall \,0 \le t \le T_0,\, \, X_{t}=X_0=x$ a.s.

\vspace{2mm}
\begin{theorem} \label{theorem:wh J*-J*} For any  $0< T_0 < T$ and any $m \ge 1$
\begin{equation}\label{theorem:wh J*-J*-eq-1}
\E\,|\wh J^*_{T_0} -J^*(T_0,x,Y_{T_0})| \le \,  \delta_\zs{m} \,,
\end{equation}
where
$$
 \delta_\zs{m}=\delta_\zs{m}(x,T_0)=\Gamma \,\h^\gamma_{1}\,x^\gamma\,\left(  \big(2\,\iota_0\big)^\gamma\,+
 \left(\,(2m-1)!! \, \beta^{2\,m}/(2 |\alpha_1|)^{m}\right)^{\gamma/2\,m}\, \right) \,\e\,(T_0)^\gamma\,,
$$
 $\iota_0=\beta/\sqrt{2\,|\alpha_1|}$ and $\e \,(T_0)$ is defined in \eqref{def:e(T0)}.
\end{theorem}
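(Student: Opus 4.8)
The plan is to control the difference $\wh J^*_{T_0}-J^*(T_0,x,Y_{T_0})$ by splitting it into two sources of error: the deviation of the estimated value function $\wh h$ from the true $h$, and the deviation of the estimated wealth process $\wh X^*$ from the true optimal wealth $X^*$. Since $J^*_{T_0}=x^\gamma h(T_0,Y_{T_0})$ (Theorem~\ref{Theorem_main_1}) and $\wh J^*_{T_0}$ has the corresponding form in terms of $\wh h$ and $\wh X^*$, we write
$$
\wh J^*_{T_0}-J^*_{T_0}
=\E_{T_0}\Big(\int_{T_0}^T\big((\wh c^*_t)^\gamma(\wh X^*_t)^\gamma-(c^*_t)^\gamma(X^*_t)^\gamma\big)\d t
+(\wh X^*_T)^\gamma-(X^*_T)^\gamma\Big),
$$
and insert/subtract mixed terms so that each summand isolates either a $\wh h-h$ term (with the true $X^*$) or an $\wh X^*-X^*$ term (with the estimated consumption). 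First I would use the elementary inequality $|a^\gamma-b^\gamma|\le|a-b|^\gamma$ for $0<\gamma<1$ and nonnegative $a,b$, reducing everything to bounding $\E_{T_0}|\wh X^*_t-X^*_t|^\gamma$ and $\E_{T_0}|\wh h(t,Y_t)^{-q_*}-h(t,Y_t)^{-q_*}|^\gamma$ pathwise in $t$, then integrating.

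The key intermediate estimate is a bound on $\|\wh h-h\|_\infty$ in terms of $|\wh\alpha-\alpha|$. Both $h$ and $\wh h$ are fixed points of Feynman-Kac-type operators $\cL$ and $\wh\cL$ that differ only through the drift parameter in the Ornstein-Uhlenbeck generator (and through $\wh Q$ vs.\ $Q$, $\wh\theta$ vs.\ $\theta$ — but since $\mu$ is known here, $\wh\theta=\theta$ and the only change is $\alpha\mapsto\wh\alpha$). Using the contraction property (Proposition~\ref{Pr.sec:Prl.3}), $\|\wh h-h\|_*\le\|\wh\cL_{\wh h}-\cL_{\wh h}\|_*+\lambda\|\wh h-h\|_*$, so $\|\wh h-h\|_*\le(1-\lambda)^{-1}\|\wh\cL_{f}-\cL_f\|_*$ with $f=\wh h\in\cX$. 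The term $\|\wh\cL_f-\cL_f\|_*$ is estimated by comparing $\eta^{t,y}$ and $\wh\eta^{t,y}$: these satisfy linear SDEs with the same Brownian motion and same $\beta$, so $\eta^{t,y}_s-\wh\eta^{t,y}_s$ is explicitly controllable, with $\E|\eta^{t,y}_s-\wh\eta^{t,y}_s|\le C\,|\wh\alpha-\alpha|\,(1+|y|)$ uniformly over the bounded parameter interval $[\alpha_2,\alpha_1]$. Feeding this through the Lipschitz bounds on $Q$ (controlled by $Q_1^*$) and on $h$ in $y$ (controlled by $h_1^*$ from Lemma~\ref{lemme:bonde-of-derive-h}), one gets $\|\wh h-h\|_\infty\le C\,\h_1\,|\wh\alpha-\alpha|$ after optimizing the auxiliary parameter; this is exactly where the constant $\h_1$ in \eqref{def:h_T-Gamma_T} comes from, with the factor $(1+2\gamma+\zeta_0)/(1+\zeta_0)$ reflecting the $1/(1-\lambda)$-type bookkeeping.

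Next, $\wh X^*_t-X^*_t$: using the explicit representations \eqref{def:X*_s} and \eqref{def:wh-X*_s}, with identical starting value $x$ at $T_0$ (recall the strategy on $[0,T_0]$ keeps $X_t=x$), the difference is driven by $\wh\a^*-\a^*$ and $\wh\b^*-\b^*$, both of which are $O(\|\wh h-h\|_\infty)$ plus, via $\wh\theta=\theta$, zero contribution from the $\theta$-terms. A Gronwall/$L^2$-isometry argument (this is where $\wt c$, $c_0$, $\wt d$ enter) yields $\sup_{T_0\le t\le T}\E_{T_0}|\wh X^*_t-X^*_t|^2\le \wt c\,\|\wh h-h\|_\infty^2\cdot(\text{stuff})$, hence by Jensen an $L^\gamma$ bound. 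Assembling: the consumption term contributes a factor $q_*\wt T(\wt d)^\gamma$, the wealth term a factor $(\wt T+1)(\sqrt{\wt c q_*})^\gamma$, both carrying $\varkappa^{-\gamma}e^{\gamma\varkappa\wt T}$ from the $e^{-\varkappa(T-t)}$-weighting — this is the constant $\Gamma$. Combining with $\|\wh h-h\|_\infty\le C\h_1|\wh\alpha-\alpha|$ and $x^\gamma$ from the initial endowment, then taking $\E$ and invoking Proposition~\ref{prop:sequential_alpha} to replace $\E|\wh\alpha-\alpha|$ (or rather $\E|\wh\alpha-\alpha|^\gamma\le(\E|\wh\alpha-\alpha|)^\gamma$, $\gamma<1$, then $\le\e(T_0)^\gamma$ via another Jensen step using $\E|\wh\alpha-\alpha|^{2}\le\e(T_0)^2$, actually more carefully $\E|\wh\alpha-\alpha|^\gamma \le (\E|\wh\alpha-\alpha|^2)^{\gamma/2}\le \e(T_0)^\gamma$), gives the stated $\delta$. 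The terms $(2\iota_0)^\gamma$ and $\wt c_m$ arise from bounding moments of the Gaussian $Y_t$ (stationary-variance bound $\iota_0^2=\beta^2/(2|\alpha_1|)$) that appear when one replaces the pathwise $(1+|Y_t|)$-factors by their expectations.

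The main obstacle I anticipate is the wealth-process comparison: because $\wh\a^*$ and $\a^*$ involve $h^{-q_*}$ with $h\ge1$ the nonlinearity is benign, but the stochastic exponential $\cE_{t,s}$ has only exponential moments that blow up in $\wt T$, so obtaining a bound with the clean constant $\wt c=4\wt T e^{c_0\wt T}\wt d^2$ requires care in choosing which norm to Gronwall in and in keeping the $L^2$-isometry for the martingale part sharp; the bounded-parameter assumption $\alpha\in[\alpha_2,\alpha_1]$, $\alpha_1<0$, is essential here to make all the Ornstein-Uhlenbeck moment constants ($\iota_0$, $\beta_1$, $\k_1(m)$) uniform.
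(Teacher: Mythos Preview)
Your overall architecture is right --- decompose, use $|a^\gamma-b^\gamma|\le|a-b|^\gamma$, bound $\wh X^*-X^*$ by Gronwall in $L^2$, and reduce everything to controlling $\wh h-h$ in terms of $|\wh\alpha-\alpha|$. But there is a genuine gap in that last step.

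You propose to bound $\|\wh h-h\|_*$ (or $\|\wh h-h\|_\infty$) via the contraction of $\cL$ in the metric $\varrho_*$ of \eqref{def:norme_in_cX}: write $\|\wh h-h\|_*\le\|\wh\cL_{\wh h}-\cL_{\wh h}\|_*+\lambda\|\wh h-h\|_*$ and solve. This does not work, because $\|\wh\cL_f-\cL_f\|_*=\sup_{(t,y)\in\cK}e^{-\varkappa(T-t)}|\wh\cL_f(t,y)-\cL_f(t,y)|$ is \emph{not finite}. The difference $\wh\cL_f-\cL_f$ is governed by $\wh\eta^{t,y}_s-\eta^{t,y}_s$, and you yourself note that $\E|\wh\eta^{t,y}_s-\eta^{t,y}_s|\le C|\wh\alpha-\alpha|(1+|y|)$; the linear drift of the Ornstein--Uhlenbeck process forces linear growth in $|y|$, so the supremum over $y\in\bbr$ blows up. Consequently the claimed uniform bound $\|\wh h-h\|_\infty\le C\,\h_1\,|\wh\alpha-\alpha|$ cannot be obtained this way.

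The paper's remedy is to introduce a \emph{$y$-weighted} metric
\[
\wt\varrho_*(f,g)=\sup_{(t,y)\in\cK}\frac{e^{-\varkappa(T-t)}|f(t,y)-g(t,y)|}{\iota_0+|y|}
\]
(equation \eqref{def:metric_wt_varrho*}) and redo the fixed-point comparison in this metric (Proposition~\ref{prop:h-hate-converges-to-h}). The weight is chosen so that $(\iota_0+\E_{T_0}|\wh\eta^{t,y}_s|)/(\iota_0+|y|)\le 2$ uniformly, which is exactly what makes the contraction close. The output is a \emph{pointwise} bound
\[
|\wh h(t,y)-h(t,y)|\le \h_1\,(\iota_0+|y|)\,e^{\varkappa(T-t)}\,|\wh\alpha-\alpha|,
\]
not a sup-norm bound. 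When you then insert $y=Y_s$ and take $\E_{T_0}$, the factor $(\iota_0+|Y_s|)$ survives and, via $\E_{T_0}|Y_s|\le\iota_0+|Y_{T_0}|$, becomes $(2\iota_0+|Y_{T_0}|)$ --- this is the content of Lemma~\ref{lemma:XT_bound}. This $|Y_{T_0}|$-factor is the true source of the constants $(2\iota_0)^\gamma$ and $\wt c_m$ in the statement, and it also forces an extra step you do not mention: since $\wh\alpha$ is $\cF_{T_0}$-measurable and correlated with $Y_{T_0}$, one cannot simply factor $\E\big(|Y_{T_0}|^\gamma|\wh\alpha-\alpha|^\gamma\big)$; the paper uses H\"older with exponents $2/\gamma$ and $2/(2-\gamma)$, then Jensen on the $Y_{T_0}$-moment, to separate the two and land on $\wt c_m\,\e(T_0)^\gamma$.
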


\begin{proof}
Note that for any $T_0 <T$
\begin{eqnarray}
|\wh J^*_{T_0}-J^*_{T_0}|&\le& \E_\zs{T_0}\,\left(\int^T_\zs{T_0}
|(\wh c^*_\zs{t})^{\gamma}\,(\wh X^{*}_\zs{t})^{\gamma}-( c^*_\zs{t})^{\gamma}\,( X^{*}_\zs{t})^{\gamma} |\, \d t \right)\nonumber
+
\E_\zs{T_0} \, |(\wh X^{*}_\zs{T})^{\gamma} -( X^{*}_\zs{T})^{\gamma}|\nonumber\\
&\le&
\E_\zs{T_0} \, \left(\int^T_\zs{T_0}
| \wh c^*_\zs{t}\,\wh X^{*}_\zs{t}- c^*_\zs{t}\, X^{*}_\zs{t} |^{\gamma}\, \d t \right)\nonumber
+
\E_\zs{T_0} \,  | \wh X^{*}_\zs{T} -X^{*}_\zs{T}|^{\gamma}\,.
\end{eqnarray}
Using  Lemma \ref{lemma:XT_bound}  we get
\begin{equation*}
|\wh J^*_{T_0} -J^*(T_0,x,Y_{T_0})| \le \,  \Gamma \,\h_{1}^\gamma\,x^\gamma \,  \big(2\,\iota_0+|Y_{T_0}| \big)^\gamma\,|\ov \alpha|^\gamma
\end{equation*}
\vspace{2mm}
and, therefore,
$$
\E \,|\wh J^*_{T_0} -J^*(T_0,x,Y_{T_0})| \le \Gamma \,x^\gamma \,\h^\gamma_{1}\, \big(2\,\iota_0\big)^\gamma\,\E \,|\ov \alpha|^\gamma
\Gamma \,x^\gamma \,\h^\gamma_{1}\, \E \,\left( \big|Y_{T_0}\big|^\gamma\,|\ov \alpha|^\gamma \right)\,.
$$
By Holder's and Jensen's inequalities for $m' =m\, (2-\gamma)/\gamma>1$ with $m \ge 1$
\begin{eqnarray*}
 \E \,\left( \big|Y_{T_0}\big|^\gamma\,|\ov \alpha|^\gamma \right) &\le&  \left(\E \, \big|Y_{T_0}\big|^{\frac{2 \gamma}{2-\gamma}} \right)^{(2-\gamma)/2} \, \left(\E \,  |\ov \alpha|^{2}\right)^{\gamma/2}\\
&\le&
\left(\E \, \big|Y_{T_0}\big|^{\frac{2\, \gamma\, m'}{2-\gamma}} \right)^{(2-\gamma)/2\,m'} \, \e \,(T_0)^\gamma
\le
\left(\E \, \big|Y_{T_0}\big|^{2\,m} \right)^{\gamma/2\,m} \, \e \,(T_0)^\gamma\,.
\end{eqnarray*}
From \cite{KabanovPergamenshchikov2003}, Lemma 1.1.1 it follows that $\E \, \big|Y_{T_0}\big|^{2m} \le c_m(T_0) \le c_m(0)$  where
$$
c_m(T_0)=(2m-1)!! \, \beta^{2\,m}\, \left(\dfrac{1-e^{2\alpha T_0}}{2 |\alpha|}\right)^{m}\,.
$$
We conclude that for any $m \ge 1$
\begin{eqnarray}\label{bound:E-Ygamma-alphagamma}
 \E \,\left( \big|Y_{T_0}\big|^\gamma\,|\ov \alpha|^\gamma \right)
&\le&
 \, \left(\,(2m-1)!! \, \beta^{2\,m}/(2 |\alpha_1|)^{m}\right)^{\gamma/2\,m}\ \, \e^\gamma \,(T_0)\,,
\end{eqnarray}
which gives the desired result.
 \end{proof}

\begin{remark}
We observe in Theorem \ref{theorem:wh J*-J*}, that the expected deviation $\E\,|\wh J^*_{T_0} -J^*(T_0,x,y)|$ can be arbitrary small, if either we observe the process $Y$ in a wide interval $[0, T_0]$  so that $\E \,|\ov \alpha|$ be small enough, or we invest a small capital $x$ at the initial time. That means, when the estimation interval is not wide enough, which is the case in practice, we can always find a consumption-investment strategy that belongs closer to the optimal one. For this aim, we need to be cautious in choosing the initial endowment and we need to take into account 
the upper bound
\eqref{theorem:wh J*-J*-eq-1}.
\end{remark}

\begin{lemma}\label{lemma:XT_bound}
For any  $0<T_0\le T$
\begin{equation}\label{def:wt d}
\E_\zs{T_0} \left(\sup_\zs{T_0\le s\le T}\,(\wh X^{*}_\zs{s})^2 \right)< x^2 \, \wt d^2
\quad\mbox{and}\quad \wt d^2=4 e^{2\wt T (A^*+(B^*)^2)}\,,
 \end{equation}
where $A^*=\sup_{(s,y)\in \cK} \wh \a^*(s,y)$ and  $B^*=\sup_{(s,y)\in \cK}  \wh \b^*(s,y)$.  Moreover,
\begin{equation}\label{eq:lemma:XT_bound-2}
 \sup_{T_0 \le t \le T}\E_\zs{T_0} |\wh X^*_t- X^*_t|^\gamma \le k_1\, x^\gamma \, \left(\h_{1}\,(2\,\iota_0+|Y_{T_0}| )\right)^\gamma\,|\ov \alpha|^\gamma
 \end{equation}
 and
 \begin{equation}\label{eq:lemma:XT_bound-3}
  \E_\zs{T_0} \,\left(\int^T_\zs{T_0}| \wh c^*_\zs{t}\,\wh X^{*}_\zs{t}- c^*_\zs{t}\,
X^{*}_\zs{t} |^{\gamma}\, \d t \right)
  \le
  k_2\,x^\gamma\, \left(\h_{1}\,(2\,\iota_0+|Y_{T_0}| )\right)^\gamma\,|\ov \alpha|^\gamma\,,
 \end{equation}
where
$k_1=  (\sqrt{\wt c q_*})^{\gamma}\,e^{\gamma \,\varkappa \,\wt T}/\varkappa^\gamma$ and
$k_2=\left(\wt T(\sqrt{\wt c q_* })^{\gamma}+\wt d^\gamma q_* \, \wt T\right) \,e^{\gamma \,\varkappa \,\wt T}/\varkappa^\gamma$.
\end{lemma}

\begin{proof}
It is clear from \eqref{def:wh-X*_s}, that for the bounded function $\wh \b^{*}(y)$ the process
$(\wh \cE_\zs{t,s})_\zs{t\le s\le T}$ is a quadratic integrable martingale and
by the Doob inequality
\begin{eqnarray*}
\E_\zs{T_0} \left(\sup_\zs{T_0\le s\le T}\,(\wh X^{*}_\zs{s})^2 \right) &\le&
x^2 \,e^{2 \wt T A^*} \E\,\sup_\zs{t\le s\le T}\,\wh \cE^{2}_\zs{t,s}\,\le \,x^2\,4\,e^{2 \wt T A^*}\,\E\,\,\wh \cE^{2}_\zs{t,T}\\[2mm]
&\le&  4\, x^2\,e^{2 \wt T A^*}\,e^{ \wt T (B^*)^2}\,.
\end{eqnarray*}
This gives \eqref{def:wt d}. We set $\Delta_t=\wh X^{*}_\zs{t}- X^{*}_\zs{t}$, $A_s=\a^{*}(s,Y_\zs{s})$, $B_s=\b^{*}(Y_\zs{s})$,
$\wh A_s=\wh \a^{*}(s,Y_\zs{s})$ and $\wh B_s=\wh \b^{*}(Y_\zs{s})$.  The functions $\wh \a^{*}(s,y)$ and $\wh \b^{*}(y)$
are defined in \eqref{eq:wh-a*_and_wh-b*}. It is clear that if $\mu$ is known, (i.e. $\wh \mu=\mu$) the function $\wh B_s=B_s$. But we keep this
function to use this proof in the case when the paramster $\mu$ is unknown.
 Moreover we define $\varphi_1(s)=\wh A_s \wh X_s^* - A_s X_s^* $ and $\varphi_2(s)=\wh B_s \wh X_s^* - B_s X_s^* $. So, from \eqref{eq:EDS_X*} we get
\begin{eqnarray*}
\Delta_t^2 &=& \left( \int_{T_0}^t \varphi_1(s) \, \d s+\int_{T_0}^t \varphi_2(s) \, \d W_\zs{s}\,\right)^2
\le 2 (t-T_0)\int_{T_0}^t \varphi_1^2(s) \, \d s+2 \left(\int_{T_0}^t \varphi_2(s) \, \d W_\zs{s}\,\right)^2\,.
\end{eqnarray*}
We observe that
$$
 \varphi^2_1(s) \le \left( |\wh A_s - A_s |\,|\wh X_s^*|+|A_s| |\Delta_s|\right)^2
\le 2 |\wh A_s - A_s |^2\,|\wh X_s^*|^2+ 2|A_s|^2 |\Delta_s|^2 \,.
$$
Furthermore, since $\wh B_s = B_s$ we obtain
$$
 \varphi^2_2(s)\le \left( |\wh B_s - B_s |\,|\wh X_s^*|+|B_s| |\Delta_s|\right)^2\le  |B_s|^2 |\Delta_s|^2 \,.
$$
Setting now $g(t)=\E_\zs{T_0}(\Delta_t^2)$ we obtain
$$
 g(t) \le c_0 \int_{T_0}^t g(s) \, \d s + \psi(t)
 \quad\mbox{and}\quad
\psi(t)= 4\,\wt T\, \int_{T_0}^t \E_\zs{T_0} |\wh A_s - A_s |^2\,|\wh X_s^*|^2 \, \d s\,.
$$
The Gronwall-Bellman inequality yields
\begin{eqnarray*}
 g(t) &\le& \psi(t) e^{c_0 t}
\le x^2\,4 \, \wt T \,e^{c_0 T} \int_{T_0}^t \E_\zs{T_0} \left(|\wh A_s - A_s |^2\,|\wh X_s^*|^2\right) \, \d s\\
&\le& \wt c \, x^2\, \int_{T_0}^t \E_\zs{T_0} |\wh A_s - A_s |^2\,\d s
\le \wt c \, x^2\,\int_{T_0}^t \E_\zs{T_0} |\wh h(s,Y_s)^{-q_*}-h(s,Y_s)^{-q_*} |^2\, \d s\\
&\le& \wt c \, x^2\, q_* \int_{T_0}^t \E_\zs{T_0} |\wh h(s,Y_s)-h(s,Y_s)|^2 \, \d s\,,
\quad
\wt c=4\, \wt T\, e^{c_0 \wt T} \wt d^2\,.
\end{eqnarray*}
Using \eqref{eq:varrho_h-h} and Lemma \ref{lemme:bound-EY-and-EG(t,y)}
we obtain, that for any $T_0 \le s\le T$
\begin{align}\nonumber
\E_\zs{T_0}|\wh h(s,Y_s)&-h(s,Y_s)|
\le \h_{1}\,\E_\zs{T_0} \left( e^{\varkappa (T-s)}(\iota_0+|Y_s|)\right) |\ov \alpha|
\\[2mm] \nonumber
&\le \h_{1}\,(\iota_0+\E_\zs{T_0}|Y_s| ) \, e^{\varkappa (T-s)}\, |\ov \alpha|
\\[2mm]\label{eq:bond_wh_h-h}
&\le \h_{1}\,(2\,\iota_0+|Y_{T_0}| ) \, e^{\varkappa (T-s)}\, |\ov \alpha|\,.
\end{align}
Therefore,
\begin{equation*}
 g(t) \le x^2\,\wt c\, q_* \, \left(\h_{1}\,(2\,\iota_0+|Y_{T_0}| )\right)^2\,\frac{ e^{2\,\varkappa\, \wt T}}{\varkappa^2}\, |\ov \alpha|^2\,.
\end{equation*}
Hence, \eqref{eq:lemma:XT_bound-2} holds. We show now inequality \eqref{eq:lemma:XT_bound-3}.  Note that
in view of
\eqref{eq:pi*_and_c*-opt-strategy} the optimal consumption $0\le c^*_\zs{t}\le 1$. Thus,
\begin{eqnarray*}
\E_{T_0}\,\left(\int^T_\zs{T_0}
| \wh c^*_\zs{t}\,\wh X^{*}_\zs{t}- c^*_\zs{t}\, X^{*}_\zs{t} |^{\gamma}\, \d t \right)
&\le&
\E_{T_0}\,\left(\int^T_\zs{T_0}
|\wh c^*_\zs{t}\,- c^*_\zs{t} |^{\gamma} |\wh X^{*}_\zs{t}|^\gamma\, \d t \right)
+
\,\int^T_\zs{T_0}
\E_{T_0}|\wh X^{*}_\zs{t}- X^{*}_\zs{t} |^{\gamma}\, \d t \\
&\le&
x^\gamma \,\wt d^\gamma \, \,
\E_{T_0}\,\left(\int^T_\zs{T_0}
|\wh c^*_\zs{t}\,- c^*_\zs{t} |^{\gamma} \, \d t \right)
+
\wt T\,\sup_{T_0 \le t \le T}\E_{T_0}\,|\wh X^{*}_\zs{t}- X^{*}_\zs{t} |^{\gamma}\,.
\end{eqnarray*}
Using now the upper bound
 \eqref{eq:bond_wh_h-h} and taking into account that   $\inf_\zs{(t,y) \in \cK}\,h(t,y) \ge 1$,
we obtain
\begin{align*}
 \E_{T_0}\,\left(\int^T_\zs{T_0}|\wh c^*_\zs{t}\,- c^*_\zs{t} |^{\gamma} \, \d t \right)
&\le
q_*\,\,\int^T_\zs{T_0}\E_{T_0}\, |\wh h(s,Y_s)-h(s,Y_s)|^\gamma|\, \d t\\[2mm]
&\le
q_* \, \wt T \,
 \left(\h_{1}\,(2\,\iota_0+|Y_{T_0}| ) \right)^\gamma \,\frac{ e^{\gamma\,\varkappa\, \wt T}}{\varkappa^\gamma}\,\E_{T_0}\,|\ov \alpha|^\gamma\,.
\end{align*}
Therefore
$$
\E_{T_0}\,\left(\int^T_\zs{T_0}
| \wh c^*_\zs{t}\,\wh X^{*}_\zs{t}- c^*_\zs{t}\, X^{*}_\zs{t} |^{\gamma}\, \d t \right)
\le
 k_2\,x^\gamma \, \left(\h_{1}\,(2\,\iota_0+|Y_{T_0}| )\right)^\gamma\, \E_{T_0} \,|\ov \alpha|^\gamma\,.
$$
This implies
 \eqref{eq:lemma:XT_bound-3} and then Lemma \ref{lemma:XT_bound}.
\end{proof}

\subsection{Unknown stock price appreciation rate $\mu$}\label{sec:mu-unknown}
In practice, it is not realistic to consider known the stock price appreciation rate $\mu $. In this section, in addition to the unknown drift parameter $\alpha$ of the economic factor process, we consider an unknown stock price appreciation rate  
$\mu$.
 We recall that the dynamics of the risky stock is given in \eqref{eq:BS-model}. Let $\wh \mu$ its estimate defined by
\begin{equation}\label{def:mu-estimate}
\wh \mu=\dfrac{Z_{T_0}}{T_0} \quad \mbox{ with}\quad Z_t=\int_0^t\frac{1}{S_t} \, \d S_t\,.
\end{equation}

\begin{lemma} \label{lemme:bond-estimated-mu}
For any $0<T_\zs{0}<T$
\begin{equation}\label{def:e(T0)-mu}
\E \vert \wh \mu-\mu \vert \le \e_{1}(T_\zs{0})
\quad\mbox{and}\quad
\E \vert \wh \mu-\mu \vert^{2} \le \e^{2}_{1}(T_\zs{0})
\,,
 \end{equation}
where $\e_{1}(T_0)=\sigma^{*}/\sqrt{T_\zs{0}}$ and $\sigma^*=\sup_{y \in \bbr} \sigma(y)$.
\end{lemma}
\begin{proof}
From the definition of the process $Z$ we get
$$
\wh \mu-\mu=\dfrac{1}{{T_0}} \int_0^{T_0} \sigma(Y_t) \, \d W_t\,.
$$
This implies directly the bounds \eqref{def:e(T0)-mu}.
\end{proof}

\vspace{3mm}
\noindent Let the optimal value functions
$J^*(T_0,x,y)$ and $\wh J^*_{T_0}$ its estimate given in \eqref{def:wh J*}, and let define the constants
$$
k'_1= 2\,\sqrt{\wt c \wt T}\, \left( \dfrac{2 \mu_2+ r+\sigma_1+1}{\sigma_1^2 (1-\gamma)}\right)
\quad \mbox{and}\quad
k'_2=  \frac{e^{\,\varkappa \wt T}}{\varkappa}\,.
$$
Moreover, we define
$$
\Gamma_1=k_3+k_5 \quad \mbox{and} \quad \Gamma_2=k_4+k_6\,,
$$
where
$
k_3=(k'_1)^\gamma+ \left(\sqrt{2\,\wt c\, q_*} \, k'_2\, \h_2 \right)^\gamma,
\quad
k_4= \left(\sqrt{2\,\wt c\, q_*} \, k'_2\, \h_1 \right)^\gamma
$
,
$$
k_5=\wt T\,(k'_1)^\gamma+k_7\, (k'_2 \, \h_2)^\gamma,
\,
k_6=k_7 \,\left(\,k'_2 \, \h_1\right)^\gamma,
\,
k_7=\left(\sqrt{2\,\wt c\, q_*}+q_* \wt d^\gamma \right)\,.
$$
recall that $\wt c=4 e^{c_0 t} \wt d^2$ and $\wt d$ is given in \eqref{def:wt d}. The constants $\h_{1}$ is given in \eqref{def:h_T-Gamma_T} and
\begin{equation}\label{def:h2}
\h_2 \, =\frac{\gamma\, (\mu_2+\,r\,)}{(1-\gamma)\, \sigma^2_1}\frac{2\,\wt T^2}{\iota_0} \,.
\end{equation}
We are dealing with the following result
\begin{theorem} \label{theorem:wh J*-J*-2}
The  estimate of optimal cost function $\wh J^*_{T_0}$ satisfies the following inequalities
\begin{equation}\label{theorem:wh J*-J*-2-eq1}
|\wh J^*_{T_0} -J^*(T_0,x,Y_{T_0})| \le x^\gamma\, \Gamma_1\, (2 \, \iota_0+ |Y_{T_0}|)^\gamma \,  \varpi^\gamma
+x^\gamma \,\Gamma_2 \,(2 \, \iota_0+ |Y_{T_0}|)^\gamma \,|\ov \alpha|^\gamma \,,
\end{equation}
where $\varpi=|\wh \mu-\mu |+|\wh \mu-\mu |^{2}$.
Moreover, for any $m \ge 1$
\begin{equation}\label{theorem:wh J*-J*-2-eq2}
\E\,|\wh J^*_{T_0} -J^*(T_0,x,Y_{T_0})| \le \check{\delta}_\zs{m} \,,
\end{equation}
with $\check{\delta}_\zs{m}=\check{\delta}_\zs{m}(x,T_0)=x^\gamma\,\left(\Gamma_1(\, 3 \iota_0^\gamma+|Y_0|^\gamma) \, \e_{2}(T_0)^\gamma+ \,\check{\Gamma}_\zs{m} \,\e(T_0)^\gamma\right)$ and
$$
\check{\Gamma}_\zs{m}=\Gamma_2\,
\left(
(2 \iota_0)^\gamma+
\left(\,(2m-1)!! \, \beta^{2\,m}/(2 |\alpha_1|)^{m}\right)^{\gamma/2\,m}
\right)\,.
$$
Here  $\iota_0=\beta/\sqrt{2\,|\alpha_1|}$, $\e_{2} \,(T_0)=\e_{1} \,(T_0)+\e^{2}_{1} \,(T_0)$, $\e_{1} \,(T_0)$ is given in \eqref{def:e(T0)-mu} and  $\e \,(T_0)$ is defined in \eqref{def:e(T0)}.
\end{theorem}
\begin{proof}
We follow the same arguments as in the proof of Theorem \ref{theorem:wh J*-J*}, and use Lemma \ref{lemma:XT_bound-2} bellow to conclude for \eqref{theorem:wh J*-J*-2-eq1}. Now, to show \eqref{theorem:wh J*-J*-2-eq2}, we observe from \eqref{theorem:wh J*-J*-2-eq1} that
\begin{eqnarray*}
\E\,|\wh J^*_{T_0} -J^*(T_0,x,Y_{T_0})| &\le&  x^\gamma\, \Gamma_1\, \left((2 \, \iota_0)^\gamma+ (\E|Y_{T_0}|)^\gamma \right) \,
\E\varpi^{\gamma}
\nonumber\\[2mm]
&+&x^\gamma \,\Gamma_2 \,(2 \, \iota_0)^\gamma \,\e(T_0)^\gamma+\E (|Y_{T_0}|^\gamma \,|\ov \alpha|^\gamma)\,.
\end{eqnarray*}
Taking into account that $\E\varpi\le \e_{2} \,(T_0)$ and using
 the bound \eqref{bound:E-Ygamma-alphagamma} we obtain \eqref{theorem:wh J*-J*-2-eq2}.
\end{proof}

\begin{lemma}\label{lemma:XT_bound-2}
The  estimate of the wealth process $(\wh X^{*}_t)_\zs{T_\zs{0}\le t\le T} $ satisfies the following inequalities
\begin{eqnarray} \nonumber
 \sup_{T_0\le t \le T}\E_\zs{T_0} &|\wh X^{*}_t - X^{*}_t|^\gamma \le
 x^\gamma \,k_3 \,(2 \, \iota_0+ |Y_{T_0}|)^\gamma \,\varpi^\gamma
\\[2mm] \label{eq:boundXgamma-2}&
+ \,x^\gamma \, k_4 \,(2 \, \iota_0+ |Y_{T_0}|)^\gamma \,|\ov \alpha|^\gamma
 \end{eqnarray}
 and
\begin{align}\nonumber
  \E_\zs{T_0} \,\left(\int^T_\zs{T_0}
| \wh c^*_\zs{t}\,\wh X^{*}_\zs{t}- c^*_\zs{t}\, X^{*}_\zs{t} |^{\gamma}\, \d t \right)
&\le x^\gamma\,k_5 \,(2 \, \iota_0+ |Y_{T_0}|)^\gamma \,\varpi^\gamma
\\[2mm]\label{lemma:XT_bound-2-eq-2}
&+x^\gamma \, k_6 \,(2 \, \iota_0+ |Y_{T_0}|)^\gamma \,|\ov \alpha|^\gamma \,.
 \end{align}
\end{lemma}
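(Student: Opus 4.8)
The plan is to repeat the Gronwall--Bellman argument from the proof of Lemma~\ref{lemma:XT_bound}, the only new feature being that the appreciation rate is now replaced by an estimate as well, so the two error sources $|\wh\mu-\mu|$ and $|\wh\alpha-\alpha|$ must be carried through simultaneously. First I would set $\Delta_\zs{t}=\wh X^{*}_\zs{t}-X^{*}_\zs{t}$, $A_\zs{s}=\a^{*}(s,Y_\zs{s})$, $B_\zs{s}=\b^{*}(Y_\zs{s})$ and, as before, $\varphi_\zs{1}(s)=\wh A_\zs{s}\wh X^{*}_\zs{s}-A_\zs{s}X^{*}_\zs{s}$, $\varphi_\zs{2}(s)=\wh B_\zs{s}\wh X^{*}_\zs{s}-B_\zs{s}X^{*}_\zs{s}$. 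From \eqref{eq:EDS_X*} one again gets $\Delta_\zs{t}^{2}\le 2(t-T_0)\int_{T_0}^{t}\varphi_\zs{1}^{2}(s)\,\d s+2\left(\int_{T_0}^{t}\varphi_\zs{2}(s)\,\d W_\zs{s}\right)^{2}$, and, splitting each $\varphi_i$ into a coefficient-difference part and a $\Delta$-part, $\varphi_\zs{1}(s)^{2}\le 2|\wh A_\zs{s}-A_\zs{s}|^{2}|\wh X^{*}_\zs{s}|^{2}+2|A_\zs{s}|^{2}|\Delta_\zs{s}|^{2}$ and similarly for $\varphi_\zs{2}$. The point where the present situation departs from Lemma~\ref{lemma:XT_bound} is that $\wh B_\zs{s}-B_\zs{s}=(\wh\mu-\mu)/(\sigma(Y_\zs{s})(1-\gamma))$ no longer vanishes, so it contributes a term bounded by $|\wh\mu-\mu|^{2}|\wh X^{*}_\zs{s}|^{2}/(\sigma_1(1-\gamma))^{2}$ to the forcing term of the Gronwall inequality.

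Next I would expand $\wh A_\zs{s}-A_\zs{s}=\left(|\wh\theta(Y_\zs{s})|^{2}-|\theta(Y_\zs{s})|^{2}\right)/(1-\gamma)-\left(\wh h(s,Y_\zs{s})^{-q_*}-h(s,Y_\zs{s})^{-q_*}\right)$. The first bracket is $O(|\wh\mu-\mu|)$ (using $\sigma\ge\sigma_1$ and $\wh\mu,\mu\in[\mu_1,\mu_2]$), which, together with the $\wh B_\zs{s}$-term, is what produces the numerator $2\mu_2+r+\sigma_1$ sitting inside $k'_1$; the second bracket is controlled, as for \eqref{eq:bond_wh_h-h}, by $q_*|\wh h(s,Y_\zs{s})-h(s,Y_\zs{s})|$, and here I would use the $\mu$-unknown analogue of \eqref{eq:bond_wh_h-h}, which is linear in both $|\wh\mu-\mu|$ and $|\wh\alpha-\alpha|$ with coefficients carrying the constants $\h_1$ and $\h_2$ of \eqref{def:h_T-Gamma_T} and \eqref{def:h2}. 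Writing $g(t)=\E_\zs{T_0}(\Delta_\zs{t}^{2})$, absorbing the $|\Delta_\zs{s}|^{2}$ terms into the constant $c_0$ and all the rest into a forcing term of order $\left(|\wh\mu-\mu|+|\wh\alpha-\alpha|\right)^{2}\,\E_\zs{T_0}|\wh X^{*}_\zs{s}|^{2}\,e^{2\varkappa\wt T}(\iota_0+|Y_\zs{s}|)^{2}$, the Gronwall--Bellman inequality gives $g(t)\le\psi(t)e^{c_0\wt T}$. Combining this with the moment bound $\E_\zs{T_0}\left(\sup_\zs{s}(\wh X^{*}_\zs{s})^{2}\right)\le x^{2}\wt d^{2}$ from \eqref{def:wt d} — still valid because $\wh\a^{*},\wh\b^{*}$ are uniformly bounded over $[\mu_1,\mu_2]\times[\alpha_2,\alpha_1]$ (since $\sigma\ge\sigma_1>0$ and $\wh h\ge1$) — and with Lemma~\ref{lemme:bound-EY-and-EG(t,y)} to pass from $\E_\zs{T_0}|Y_\zs{s}|$ to $2\iota_0+|Y_\zs{T_0}|$, I obtain a bound on $g(t)$, hence $\E_\zs{T_0}|\Delta_\zs{t}|^{\gamma}\le(g(t))^{\gamma/2}$ by Jensen; applying $(u+v)^{\gamma/2}\le u^{\gamma/2}+v^{\gamma/2}$ to separate the two errors and matching coefficients yields \eqref{eq:boundXgamma-2} with the stated $k_3,k_4$.

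For the second estimate \eqref{lemma:XT_bound-2-eq-2} I would copy the last part of the proof of Lemma~\ref{lemma:XT_bound}: bound $|\wh c^{*}_\zs{t}\wh X^{*}_\zs{t}-c^{*}_\zs{t}X^{*}_\zs{t}|^{\gamma}\le|\wh c^{*}_\zs{t}-c^{*}_\zs{t}|^{\gamma}|\wh X^{*}_\zs{t}|^{\gamma}+(c^{*}_\zs{t})^{\gamma}|\wh X^{*}_\zs{t}-X^{*}_\zs{t}|^{\gamma}$, use $|\wh c^{*}_\zs{t}-c^{*}_\zs{t}|=|\wh h(t,Y_\zs{t})^{-q_*}-h(t,Y_\zs{t})^{-q_*}|\le q_*|\wh h(t,Y_\zs{t})-h(t,Y_\zs{t})|$ together with $h\ge1$, $c^{*}_\zs{t}\le1$ and $\wh X^{*}_\zs{t}\le x\wt d$, then insert the $|\wh h-h|$ bound and the already proved \eqref{eq:boundXgamma-2}, integrate over $[T_0,T]$, and again split off the $|\wh\mu-\mu|^{\gamma}$ and $|\wh\alpha-\alpha|^{\gamma}$ parts so as to read off $k_5$ and $k_6$.

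\textbf{Main obstacle.} The difficulty here is bookkeeping, not a genuinely new idea: one has to keep the $\mu$-error and the $\alpha$-error separated all the way through the Gronwall step — in particular, the $\mu$-dependence that is hidden inside $\wh h$ must be routed into the $\h_2$-term and not merged with the $\wh\theta$-term or with the $\h_1$-term — and then verify that the constants coming out of the $L^{2}$-estimate, after taking the $\gamma/2$ power and using subadditivity, match exactly the $k_3,\dots,k_6$ defined before Theorem~\ref{theorem:wh J*-J*-2}. A secondary, minor point is to justify the uniform boundedness of $\wh\a^{*},\wh\b^{*}$ over the parameter ranges, so that $c_0$, $\wt c$ and $\wt d$ may be chosen independently of the estimates.
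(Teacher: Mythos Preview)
Your proposal is correct and follows essentially the same route as the paper: the Gronwall--Bellman argument of Lemma~\ref{lemma:XT_bound} is rerun with the extra $\wh B_\zs{s}-B_\zs{s}$ contribution and the $\wh\theta$-part of $\wh A_\zs{s}-A_\zs{s}$ producing the $k'_1|\wh\mu-\mu|$ term, while the $\wh h$-part is controlled via Proposition~\ref{prop:h-hate-converges-to-h-2} (the $\mu$-unknown analogue of \eqref{eq:bond_wh_h-h}), after which Jensen and subadditivity of $z\mapsto z^{\gamma}$ separate the two errors; the consumption estimate is then obtained exactly as you describe. One small slip: you write ``$\wh X^{*}_\zs{t}\le x\wt d$'' as if it were a pathwise bound, but \eqref{def:wt d} only gives $\E_\zs{T_0}\sup_\zs{s}(\wh X^{*}_\zs{s})^{2}\le x^{2}\wt d^{2}$, so in the product term $|\wh c^{*}_\zs{t}-c^{*}_\zs{t}|^{\gamma}|\wh X^{*}_\zs{t}|^{\gamma}$ you should pull the $\wh X^{*}$-factor out via Cauchy--Schwarz/H\"older rather than pointwise.
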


\begin{proof}
We follow the arguments in Lemma \ref{lemma:XT_bound} we set
$\Delta_t=\wh X^{*}_\zs{t}- X^{*}_\zs{t}$, $g(t)=\E_\zs{T_0}(\Delta_t^2)$ we get
\begin{eqnarray*}
 g(t) &\le& c_0 \int_{T_0}^t g(s) \, \d s + \psi(t),
\end{eqnarray*}
where $\psi(t)= 4 \, \E_\zs{T_0}\int_{T_0}^t \left( |\wh A_s - A_s |^2 +|\wh B_s - B_s |^2\right)\,|\wh X_s^*|^2 \, \d s$.
Through the Gronwall-Bellman inequality we get
\begin{eqnarray*}
 g(t) &\le& \psi(t) e^{c_0 t}
\le x^2\,\wt c \int_{T_0}^t \E_\zs{T_0} \left( |\wh A_s - A_s |^2 +|\wh B_s - B_s |^2\right)\,\d s\\
&\le& x^2\,\wt c \wt T \dfrac{2 (2 \mu_2+r)^2+\sigma_1^2}{\sigma_1^4 (1-\gamma)^2}\,\varpi^2
+2\,x^2\,\wt c \int_{T_0}^t \E_\zs{T_0}|\wh h(s,Y_s)^{-q_*}-h(s,Y_s)^{-q_*} |^2\, \d s\\
&\le& 2\,x^2\,\wt c \wt T\, \left( \dfrac{2\mu_2+ r+\sigma_1}{\sigma_1^2 (1-\gamma)}\right)^2\,\varpi^2
+ 2\,x^2\,\wt c q_* \int_{T_0}^t \E_\zs{T_0}|\wh h(s,Y_s)-h(s,Y_s)|^2 \, \d s \,.
\end{eqnarray*}
We use then Proposition \ref{prop:h-hate-converges-to-h-2} to get the analogous of \eqref{eq:bond_wh_h-h}:
\begin{eqnarray}\label{eq:bond_wh_h-h-2}
\E_\zs{T_0}|\wh h(s,Y_s)-h(s,Y_s)|
\le e^{\varkappa (T-s)}\,(2\,\iota_0+|Y_{T_0}| ) \, \wt \delta \,,
\end{eqnarray}
where $\wt \delta=\h_\zs{2} \,  \varpi +\h_\zs{1}\, |\ov \alpha|$.
Then
\begin{eqnarray*}
g(t) &\le& 2\,x^2\,\wt c \,\wt T\, \left( \dfrac{2 \mu_2+ r+\sigma_1}{\sigma_1^2 (1-\gamma)}\right)^2\, \varpi^2
+
2\,x^2\,\wt c\, q_*\,\frac{e^{2\,\varkappa \wt T}}{\varkappa^2}\,(2\,\iota_0+|Y_{T_0}| )^2 \,\wt \delta^2\\[2mm]
&\le& x^2\,\left(k'_1 \,\varpi +k'_2 \,(2\,\iota_0+|Y_{T_0}| )\, \wt \delta\, \right)^2 \,,
\end{eqnarray*}
i.e.
$$
\E_\zs{T_0}\vert \Delta_\zs{t}\vert\le
x\,\left(k'_1 \,\varpi +k'_2 \,(2\,\iota_0+|Y_{T_0}| )\, \wt \delta\, \right)\,.
$$
Using here the Jensen inequality for power function $z^{\gamma}$ (with $0<\gamma<1$) we obtain \eqref{eq:boundXgamma-2}. Now, we show \eqref{lemma:XT_bound-2-eq-2}. We follow the same arguments used in Lemma \ref{lemma:XT_bound} to arrive at
\begin{eqnarray*}
\E\,\left(\int^T_\zs{T_0}
| \wh c^*_\zs{t}\,\wh X^{*}_\zs{t}- c^*_\zs{t}\, X^{*}_\zs{t} |^{\gamma}\, \d t \right)
&\le&
x^\gamma \,q_*\,\wt d^\gamma \,  \int^T_\zs{T_0}
\E\, |\wh h(s,Y_s)-h(s,Y_s)|^\gamma \,\d s\\
&+&
\wt T \,\sup_{T_0 \le t \le T} \E|\wh X^{*}_\zs{t}- X^{*}_\zs{t} |^{\gamma}\,.
\end{eqnarray*}
Therefore, the upper bound \eqref{lemma:XT_bound-2-eq-2} follows immediately from
 \eqref{eq:boundXgamma-2} and \eqref{eq:bond_wh_h-h-2}.
\end{proof}

\section{Simulation}\label{sec:simulation}
In this section we use Scilab for simulations. In Fig 1. we simulate the truncated sequential estimate $\wh \alpha$ for different values of $T_0$, through 30 paths of the driving process $Y$. The sequential estimates are represented by $\times$ for $T_0=5$ days and $\ast$ for $T_0=10$ days. The true drift value of the process $Y$ is $\alpha=-5$. We take the bounds $\alpha \in [-0.15,-10]$ and set $\beta=1$.
\begin{center}

\includegraphics[height=60mm]{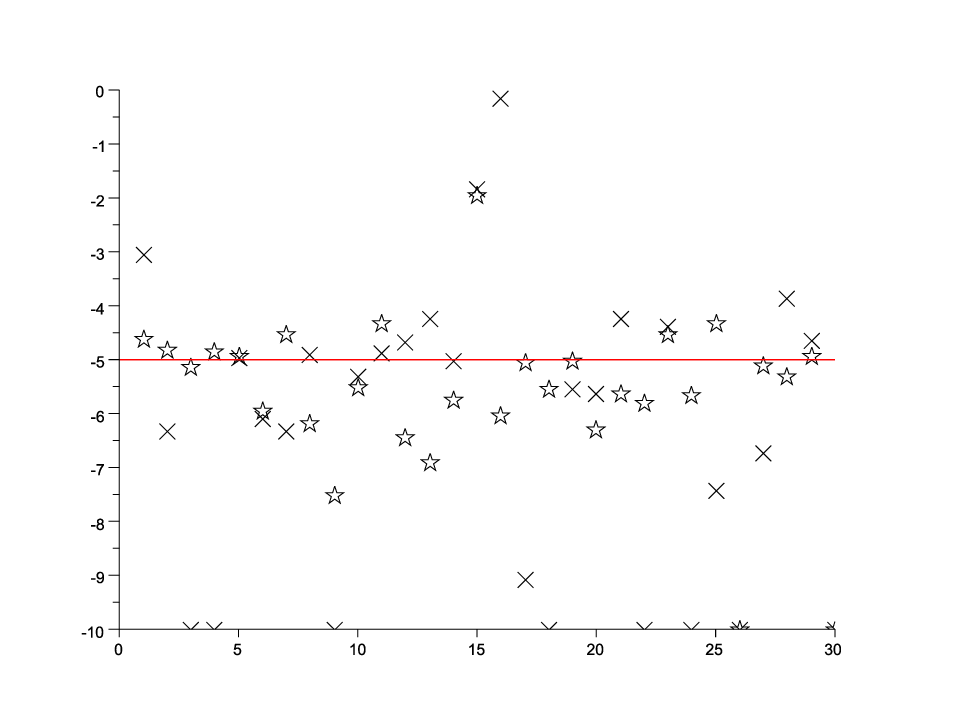}

\textbf{Fig 1: The truncated sequential estimate for $T_0=5$, $T_0=10$}
\end{center}

\begin{center}

\includegraphics[height=60mm]{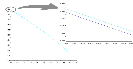}

  \textbf{Fig 2: The limit functions $h(t,0)$ and $\wh h(t,0)$}
\end{center}

In Fig 2. we simulate the limit functions $h(t,y)$ and $\wh h(t,y)$, under the following market settings: we set  $T_0=5$ and $\wt T=T-T_0=1$, $r=0.01$, $\mu=0.02$. The volatility is defined by
$\sigma(y)=0.5+\sin^2(\,y \,)$. The utility parameter is $\gamma=0.75$. To simulate $\wh h(t,y)$, we use a very pessimistic realization of the truncated estimate ie; $\wh \alpha=-0.5$. The true value is $\alpha=-5$.  We see that, even in this extreme situation, the estimated function $\wh h(t,y)$ does not deviate significantly from the real value $h(t,y)$.


\Appendix
\section{Auxiliary results}

\subsection{Bounds for $h$ and  $\cH_\zs{h}$}
Let $h$ the fixed point solution for $h=\cL_h$, where the mapping  $\cL$ is defined in
\eqref{def:L-theOperator} and
\eqref{def:Hf(t,s,y)}.
Now we study the partial derivative of the $\cH_\zs{f}(t,s,y)$ with respect to $y$.
\begin{lemma}\label{Le.sec:A.1}
For any  $T_0<t\le s\le T$
\begin{equation}
\label{sec:A.7}
\sup_\zs{y\in\bbr}
\sup_\zs{f\in \cX}
\left|
\frac{\partial }{\partial y}
\,\cH_\zs{f}(t,s,y)\,
\right|
\le Q_1^* \wt T e^{Q_* \wt T}+ \frac{e^{Q_* \wt T}}{ \nu_s}\,,
\end{equation}
where
$
\nu_s^2=\beta^2 (1-e^{2 \alpha (s-t)})/2 |\alpha|\,.
$
\end{lemma}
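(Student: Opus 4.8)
The plan is to differentiate $\cH_\zs{f}(t,s,y)=\E\left[(f(s,\eta^{t,y}_s))^{1-q_*}\,\cG(t,s,y)\right]$ with respect to $y$ by using the explicit Gaussian structure of the Ornstein-Uhlenbeck bridge. Recall $\eta^{t,y}_s = y\,e^{\alpha(s-t)}+\int_t^s\beta e^{\alpha(s-v)}\,\d\wt\U_v$, so the dependence on $y$ is only through the deterministic shift $y\,e^{\alpha(s-t)}$. First I would write the expectation as an integral against the Gaussian density of $\xi:=\int_t^s\beta e^{\alpha(s-v)}\,\d\wt\U_v$, which has mean zero and variance $\nu_s^2=\beta^2(1-e^{2\alpha(s-t)})/(2|\alpha|)$; equivalently, write $\eta^{t,y}_s = y e^{\alpha(s-t)} + \nu_s Z$ with $Z\sim\cN(0,1)$. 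Then split $\partial_y\cH_\zs{f}$ into two terms: one coming from differentiating $\cG(t,s,y)=\exp(\int_t^s Q(\eta^{t,y}_u)\,\d u)$, and one coming from differentiating $(f(s,\eta^{t,y}_s))^{1-q_*}$.

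For the first term, the chain rule gives $\partial_y\cG(t,s,y)=\cG(t,s,y)\int_t^s \D_y Q(\eta^{t,y}_u)\,e^{\alpha(u-t)}\,\d u$. Since $\cG\le e^{Q_*\wt T}$, $|\D_y Q|\le Q_1^*$, $|e^{\alpha(u-t)}|\le 1$ and $\alpha<0$, and the interval has length at most $\wt T$, this contributes at most $Q_1^*\wt T\,e^{Q_*\wt T}$ after taking the expectation of $(f(s,\eta^{t,y}_s))^{1-q_*}\le 1$ (here $f\ge 1$, $1-q_*\le 0$). For the second term, the key trick is that $f$ need not be differentiable in $y$, so instead of differentiating $f$ inside the expectation I would move the $y$-derivative onto the Gaussian density (a Cameron-Martin / integration-by-parts argument): writing $\cH_\zs{f}$ as an integral over $z$ of $(f(s,ye^{\alpha(s-t)}+\nu_s z))^{1-q_*}\cdot(\text{stuff})\cdot\phi(z/\nu_s)\,\d z/\nu_s$ and changing variables, the $\partial_y$ falls on $\phi$, producing the factor $e^{\alpha(s-t)}/\nu_s$ times $Z$ (up to sign); since $(f)^{1-q_*}\le 1$, $\cG\le e^{Q_*\wt T}$ and $\E|Z|\le 1$, this term is bounded by $e^{Q_*\wt T}/\nu_s$. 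Adding the two bounds gives \eqref{sec:A.7}. An alternative for the second term, avoiding integration by parts on a non-smooth $f$, is to use that $f$ is Lipschitz-type controlled via the uniform bound $\|f\|_\infty\le\r^*$ only — but since $f$ is merely continuous, the density-shift argument is the clean route, so I would present that one.

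The main obstacle I anticipate is the lack of differentiability of $f\in\cX$: one cannot naively differentiate under the expectation. The resolution is precisely the observation that the $y$-dependence enters only through a deterministic translation of a fixed Gaussian random variable with non-degenerate variance $\nu_s^2>0$ (for $t<s$), so the translation can be transferred to the density, whose derivative is explicit and integrable; care is needed because $\nu_s\to 0$ as $s\downarrow t$, which is why the bound legitimately blows up like $1/\nu_s$ and the lemma is stated only for $t\le s$ with the singular factor kept. A minor secondary point is to justify interchanging $\partial_y$ and $\E$ (dominated convergence), which is routine given the Gaussian tails and the uniform bounds $1\le f\le\r^*$ and $\cG\le e^{Q_*\wt T}$.
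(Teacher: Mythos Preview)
Your overall strategy---exploit the Gaussian structure to shift the $y$-derivative onto the density of $\eta_s$---is exactly right, and this is what the paper does. However, the decomposition you describe has a gap. You propose to apply the product rule first, splitting $\partial_y\cH_f$ into a term from $\partial_y\cG$ and a term from $\partial_y(f(s,\eta_s^{t,y}))^{1-q_*}$, and then for the latter to write the expectation as an integral against the law of $\eta_s$, change variables, and let ``the $\partial_y$ fall on $\phi$''. The difficulty is that the factor you call ``stuff'' is the conditional expectation $\E[\cG(t,s,y)\mid \eta_s=z]$, and after your change of variables $w=ye^{\alpha(s-t)}+z$ this conditional expectation depends on $y$ through the conditioning value $z=w-ye^{\alpha(s-t)}$. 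Hence $\partial_y$ hits not only the Gaussian density but also ``stuff'', producing an extra term of size $Q_1^*\wt T e^{Q_*\wt T}$ that your sketch omits. If you keep this extra term and bound it crudely you obtain $2Q_1^*\wt T e^{Q_*\wt T}+e^{Q_*\wt T}/\nu_s$, which is not the stated inequality (though it would suffice for the applications in the paper).

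The paper avoids this double counting by conditioning on $\eta_s$ \emph{before} differentiating, rather than applying the product rule. Writing $\cH_f(t,s,y)=\int_\bbr \wh\cH_f(s,y,z)\,\p(z,y)\,dz$ with $\wh\cH_f(s,y,z)=(f(s,z))^{1-q_*}\,\E\bigl[e^{\int_t^s Q(\B_u)\,du}\bigr]$ and $\B_u=\eta_u-k(u)\eta_s+k(u)z$ the Ornstein--Uhlenbeck bridge, the factor $(f(s,z))^{1-q_*}$ no longer carries any $y$-dependence. Thus $\partial_y\cH_f$ splits into exactly two pieces: $\int\partial_y\wh\cH_f\cdot\p\,dz$, where the derivative lands only on the bridge and is bounded by $Q_1^*\wt T e^{Q_*\wt T}$ (using $|\partial_y\B_u|\le 1$), and $\int\wh\cH_f\cdot\partial_y\p\,dz$, bounded by $e^{Q_*\wt T}/\nu_s$. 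In effect, your ``extra'' term combines with your first product-rule term to yield the single bridge term in the paper's decomposition; but seeing this cancellation requires exactly the bridge construction you were trying to avoid. The cleaner route is simply to condition first.
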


\begin{proof}
To calculate this conditional expectation note, first that
$$
 \eta_s = y e^{\alpha (s-t)} + \int_t^s \beta e^{\alpha (s-v)} \, \d \wt \U_v = y e^{\alpha (s-t)} + \xi_s\,.
$$
Since $\eta$ it is a \emph{gaussian process}, for any $t< v_\zs{1}<\ldots <v_\zs{k}< s$ and
for any bounded $\bbr^{k}\to\bbr$ function $G$
\begin{equation}\label{sec:A.11}
\E
\left(
G(\eta_\zs{v_\zs{1}},\ldots,\eta_\zs{v_\zs{k}})
|\eta_\zs{s}=z
\right)=
\E
\,
G(\B_\zs{v_\zs{1}},\ldots,\B_\zs{v_\zs{k}})
\,,
\end{equation}
where 
$
\B_\zs{v}=\eta_v-k(v)\,\eta_s+k(v)\,z
$
and the coefficient $k(v)$ is chosen so that
$$
\E \, \left( \xi_v- k(v) \,\xi_s \right)\, \xi_s =0
\quad\mbox{i.e.}\quad
 k(v)=\frac{\E \xi_v \xi_s}{\E \xi_s^2}=e^{\alpha(s-v)}\frac{1-e^{2\alpha(v-t)}}{1-e^{2\alpha(s-t)}}\le 1\,.
$$

\noindent The conditional expectation with respect to
$\eta_{s}$ lets represent $\cH_\zs{f}$ as
\begin{equation}\label{sec:A.10}
\cH_\zs{f}(t,s,y)
=
\int_\zs{\bbr}\,\wh{\cH}_\zs{f}(s,y,z)\, \p(z,y)
\, \d z\,,
\quad
\p(z,y)=\frac{1}{\nu_s \,\sqrt{2\pi}}\,
\exp \left(-\frac{(z-\mu(y))^{2}}{2 \,\nu_s^{2}} \right)\,,
\end{equation}
where $\mu(y)=\E \, \eta_s= y \, e^{\alpha(s-t)}$ and $\nu_s^{2}= Var\, \eta_s$. Since $\B_s=z$
we get
\begin{eqnarray}
 \wh{\cH}_\zs{f}(s,y,z)&=&\E \left(
\left(f(s,\eta^{t,y}_{s})\right)^{1-q_\zs{*}}
\, \exp {\left( \int_{t}^{s} Q(\eta_{u}^{t,y})\, \d u \right)}
|\eta_\zs{s}=z
\right)\nonumber\\
&=&
\E\left(
\left(f(s,z)\right)^{1-q_\zs{*}}
\, \exp {\left( \int_{t}^{s} Q(\B_{u})\, \d u\right)}
\right)
\le e^{Q_* (s-t)} \,.
\end{eqnarray}
From here it follows that
$$
\left| \frac{\partial }{\partial y} \wh{\cH}_\zs{f}(s,y,z)\right| \le \left| \int_t^s \frac{\partial Q( \B_u) }{\partial y} \, \d u \right|  \wh{\cH}_\zs{f}(s,y,z)
\le Q_1^* (s-t) \, e^{Q_* (s-t)} \le Q_1^* \, \wt T \,e^{Q_* \wt T} \,.
$$
Now from \eqref{sec:A.10} we obtain
$$
\frac{\partial \cH_\zs{f}(t,s,y)}{\partial y}=
\int_\zs{\bbr}\,\frac{\partial\wh{\cH}_\zs{f}(s,y,z)}{\partial y}\,
\p(z,y)
\, \d z
+\int_\zs{\bbr}\,\wh{\cH}_\zs{f}(s,y,z)\,
\frac{(z-\mu(y)) \,\mu'(y)}{ \nu_s^{2}}\p(z,y)\, \d z
\,.
$$
Therefore,
\begin{align*}
\left |\frac{\partial \cH_\zs{f}(t,s,y)}{\partial y} \right|
&\le Q_1^* (s-t) e^{Q_* (s-t)}+ e^{Q_* (s-t)} \frac{\mu'(y)}{ \nu_s^{2}}\int_\bbr |z-\mu(y)| \p(z,y) \d z\\[2mm]
&\le
 Q_1^* (s-t) e^{Q_* (s-t)}+ \frac{e^{(Q_* +\alpha)(s-t)}}{ \nu_s^{2}}\frac{2 \nu_s}{\sqrt{2 \pi }}
\le Q_1^* \wt T e^{Q_* \wt T}+ \frac{e^{Q_* \wt T}}{ \nu_s}\,.
\end{align*}
Hence Lemma \ref{Le.sec:A.1}.
\end{proof}

\vspace{2mm}
\begin{lemma} \label{lemme:bonde-of-derive-h}
For any $y \in \bbr$, the unique solution of the fixed point equation $h= \cL_\zs{h}$ is differentiable with respect to $y$, and its partial derivative is bounded:
\begin{equation}
\label{sec:A.nn-1}
\sup_{T_0\le t  \le T, y\in \bbr}\left|\, \frac{\partial }{\partial y }  h(t,y)\right|
\le h_1^*\,,
\end{equation}
where $h_1^*$ is given in \eqref{def:h*-1}.
\end{lemma}
\begin{proof} It is obviously sufficient to show that $\cL_h(t,y)$ is differentiable with respect to $y$, and its partial derivative is bounded:
$$
\sup_{T_0\le t  \le T, y\in \bbr}\left|\, \frac{\partial }{\partial y }  \cL_\zs{f}(t,y)\right|
\le h_1^*\,.
$$
From the definition of $\cL_f$ in \eqref{def:L-theOperator}, for all $f \in \cX$ and for all $t \in [T_0,T]$ and $y \in \bbr$ we get

$$
\frac{\partial }{\partial y }
\cL_\zs{f}(t,y)
= \E
\frac{\partial }{\partial y}
\cG(t,T,y)
+\frac{1}{q_\zs{*}}\,
\int_{t}^{T}\frac{\partial }{\partial y}
\cH_\zs{f}(t,s,y)\,\d s \,.
$$
Using lemmas \ref{Le.sec:A.1} and \ref{Le.sec:A.4}, we get
\begin{eqnarray*}
\sup_{T_0\le t  \le T, y\in \bbr}\left|\, \frac{\partial }{\partial y }  \cL_\zs{f}(t,y)\right|
&\le& \wt T\, Q_\zs{1}^*\,e^{Q_\zs{*} \wt T} +\frac{1}{q_\zs{*}}\,
\int_{t}^{T}Q_1^* \wt T e^{Q_* \wt T}\, \d s\,
+\frac{1}{q_\zs{*}}\,\int_{t}^{T} \frac{e^{Q_* \wt T}}{ \nu_s} \,\d s \\
&\le& \wt T\, Q_\zs{1}^*\,e^{Q_\zs{*} \wt T} +\frac{Q_1^*\wt  T^2}{q_\zs{*}}\, e^{Q_* \wt T}
+\frac{e^{Q_* \wt T}}{q_\zs{*}}\,\int_{t}^{T}\frac{1}{\nu_s}\,\d s \,.
\end{eqnarray*}
To estimate $\int_{t}^{T}(1/\nu_s)\,\d s$ we observe that  $2|\alpha|(s-t)\le 2|\alpha| \, \wt T $ so
$$
\nu_s^2=\beta^2 \dfrac{(1-e^{2 \alpha (s-t)})}{2 |\alpha|(s-t)} (s-t)  \ge \beta^2\dfrac{(1-e^{2 \alpha})}{2 |\alpha|}(s-t)
\quad\mbox{if}\quad (s-t) \le 1
$$
and
$$
\nu_s^2=\beta^2 \dfrac{(1-e^{2 \alpha (s-t)})}{2 |\alpha|} \ge \beta^2\dfrac{(1-e^{2 \alpha})}{2 |\alpha|}
\quad\mbox{if}\quad (s-t) \ge 1\,.
$$
Therefore, we get
\begin{eqnarray*}
\int_t^T \frac{1}{\nu_s} \, \d s &\le& \sqrt{\dfrac{2 |\alpha|}{\beta^2(1-e^{2 \alpha})}} \int_t^{t+1}\frac{1}{\sqrt{s-t}} \, \d s+\sqrt{\dfrac{2 |\alpha|}{\beta^2(1-e^{2 \alpha})}}\int_{t+1}^T \, \d s \\[5mm]
&\le&
2 \sqrt{\dfrac{2 |\alpha|}{\beta^2(1-e^{2 \alpha})}} +\sqrt{\dfrac{2 |\alpha|}{\beta^2(1-e^{2 \alpha})}} \wt T
\le 3 \sqrt{\dfrac{2 |\alpha|}{\beta^2(1-e^{2 \alpha})}} \wt T\,.
\end{eqnarray*}
Taking into account  that $\alpha_2 \le \alpha \le \alpha_1<0$ we obtain the desired result.
\end{proof}

\subsection{Properties of the function $\cG$}

Now we study the partial derivatives of the function $\cG(t,s,y)$ defined in
\eqref{def:L-theOperator}. To this end we need the following general result.

\begin{lemma}\label{Le.sec:A.3}
Let $F=F(y,\omega)$ be a $\bbr\times \Omega\to\bbr$ random bounded
function
such that for some nonrandom constant $c^{*}$
$$
\left|
\frac{\d }{\d y}\,F(y,\omega)
\right|\,\le\,
c^{*}
\quad\mbox{a.s.}
\,.
$$
Then
$$
\frac{\d }{\d y}\,\E\,F(y,\omega)=
\E\,\frac{\d }{\d y}\, F(y,\omega)
\,.
$$
\end{lemma}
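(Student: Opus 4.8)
The plan is to differentiate under the expectation sign by a routine dominated-convergence argument. First I would fix $y\in\bbr$ and, for $h\neq0$, consider the difference quotient
$$
R_h(\omega)=\frac{F(y+h,\omega)-F(y,\omega)}{h}\,.
$$
For each fixed $\omega$ the map $t\mapsto F(t,\omega)$ is differentiable with derivative bounded in absolute value by $c^{*}$, so the mean value theorem produces a point $\xi_h(\omega)$ lying between $y$ and $y+h$ with $R_h(\omega)=\frac{\d}{\d y}F(\xi_h(\omega),\omega)$; in particular $|R_h(\omega)|\le c^{*}$ a.s., with $c^{*}$ a constant (hence trivially integrable) that does not depend on $h$.

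Next I would observe that, as $h\to0$, $R_h(\omega)\to\frac{\d}{\d y}F(y,\omega)$ for a.e. $\omega$, simply because $t\mapsto F(t,\omega)$ is differentiable at $y$. The limiting function $\omega\mapsto\frac{\d}{\d y}F(y,\omega)$ is then measurable, being an almost sure limit of the measurable random variables $R_{1/n}$, and it is bounded by $c^{*}$, hence integrable. Applying the dominated convergence theorem along any sequence $h_n\to0$ (which is enough) gives
$$
\lim_{h\to0}\E\,R_h(\omega)=\E\,\frac{\d}{\d y}F(y,\omega)\,.
$$
Finally, by linearity of the expectation $\E\,R_h(\omega)=\bigl(\E F(y+h,\omega)-\E F(y,\omega)\bigr)/h$, so the left-hand side of the last display is, by definition, $\frac{\d}{\d y}\E\,F(y,\omega)$, and in particular this derivative exists because the right-hand limit does. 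Combining the two equalities yields the asserted identity.

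There is no real obstacle here; the only mildly delicate points are the measurability of $\omega\mapsto\frac{\d}{\d y}F(y,\omega)$ and the reduction of the convergence of $\E R_h$ to convergence along sequences $h_n\to0$, both of which are immediate. The uniform bound $c^{*}$ in the hypothesis is precisely what legitimizes the interchange of derivative and expectation, which is why the lemma is stated with that assumption.
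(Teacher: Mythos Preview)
Your argument is correct and is exactly the intended one: the paper's own proof simply reads ``This Lemma follows immediately from the Lebesgue dominated convergence theorem,'' and what you wrote is the standard unpacking of that sentence via the mean value bound on the difference quotient.
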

\noindent This Lemma follows immediately from the Lebesgue dominated convergence theorem.

\begin{lemma}\label{Le.sec:A.4}
For any $t<s$ the function $\cG$ satisfies the following properties:
\begin{equation}\label{sec:A.18}
\sup_\zs{y\in\bbr}\,
\left|
\frac{\partial \cG(t,s,y)}{\partial y}
\right|\,\le (s-t) Q_\zs{1}^*\,
e^{Q_\zs{*}(s-t)}
\quad\mbox{and}\quad
\frac{\partial }{\partial y}\,\E\,\cG(t,s,y)=
\E\,\frac{\partial }{\partial y}\cG(t,s,y)
\,.
\end{equation}
\end{lemma}

\begin{proof}
We have immediately
$$
\frac{\partial \cG(t,s,y)}{\partial y}=\cG(t,s,y)\G(t,s,y) \,,
$$
where
$
\G(t,s,y)=
\int^{s}_\zs{t}Q_\zs{1}(\eta^{t,y}_\zs{u})\,(\partial \, \eta_\zs{u}^{t,y}/\partial \, y)
\, \d u$ and $Q_\zs{1}(z)=\d Q(z)/\d z$.
Now Lemma~\ref{Le.sec:A.3} imply directly this lemma.
\end{proof}
\begin{lemma}\label{lemma_4.19*}
 For any $f \in \cX$ having bounded partial derivatives with respect to $y\in \bbr$ and for any $N>0$ we have
$$
\sup_{|y|<N} \,\,\sup_{0 \le t_1 < t_2 \le T} \frac{|\cL_f (t_2,y)- \cL_f(t_1,y)|}{\sqrt{t_2 -t_1}} < \infty.
$$
\end{lemma}
\begin{proof} One can check directly that
$$
\sup_{|y| <N} \,\, \sup_{0 \le t_1 < t_2 \le s \le T} \frac{\E\,|\eta_s^{t_2,y}-\eta_s^{t_1,y}|}{\sqrt{t_2 -t_1}} < \infty.
$$
This upper bound implies directly Lemma \ref{lemma_4.19*}.
\end{proof}
\subsection{Properties of the process $\eta$}

We recall that to the process $(\eta_s)_{0\le s \le T}$ is defined in \eqref{def:EDS-eta} and $(\wh \eta_s)_{0\le s \le T}$ defined in
\eqref{def:EDS-wh-eta}.

\begin{lemma}\label{lemme:bound-EY-and-EG(t,y)}
For any $T_0 \le t\le s \le T$
\begin{equation}\label{lemme:bound-EY-and-EG(t,y)-eq1}
\E_\zs{T_0} |\wh \eta_s^{t,y}| \le
\iota_0+|y|=\dfrac{\beta}{\sqrt{2 |\alpha_1|}}+|y|:=
\wt m(y)
\end{equation}
and
\begin{equation}\label{lemme:bound-EY-and-EG(t,y)-eq2}
 \E_\zs{T_0} \int_{t}^T |\wh \eta_s^{t,y}- \eta_s^{t,y}| \, \d t \le\, \dfrac{ \wt   T \wt m (y)}{|\alpha_1|}
\,|\ov \alpha| \,.
\end{equation}
Moreover, for the  known parameter $\mu$ and unknown parameter $\alpha$
\begin{eqnarray}\label{lemme:bound-EY-and-EG(t,y)-eq3}
 \E_\zs{T_0}   |\wh \cG(t,s,y)-\cG(t,s,y)|&\le& \wt T Q_1^* e^{Q_* (T-t)}  \, \frac{ \wt m(y)  }{|\alpha_1|}\, | \ov \alpha| \,,
\end{eqnarray}
where $Q^*$ and  $Q_1^*$ are defined in \eqref{def:Q*-Q_1*}, and $\wh \cG(t,s,y)$ is given in \eqref{def:wh-L-theOperator}.
\end{lemma}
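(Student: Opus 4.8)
The plan is to treat the three estimates in order, each building on the previous one, and in each case to reduce everything to the explicit Gaussian representation
$$
\wh\eta_s^{t,y}=y\,e^{\wh\alpha(s-t)}+\int_t^s\beta e^{\wh\alpha(s-v)}\,\d\wt\U_v
$$
together with the elementary inequalities $|\wh\alpha|\le|\alpha_1|$ and $0<1-e^{2\wh\alpha u}\le 1$ valid because $\wh\alpha$ is the projection onto $[\alpha_2,\alpha_1]\subset(-\infty,0)$. For \eqref{lemme:bound-EY-and-EG(t,y)-eq1}, I would take absolute values, use $e^{\wh\alpha(s-t)}\le 1$ for the deterministic term, and for the stochastic term apply Jensen's inequality to pass from $\E_\zs{T_0}|\xi|$ to $(\E_\zs{T_0}\xi^2)^{1/2}$, where the Itô isometry gives $\E_\zs{T_0}\xi^2=\beta^2(1-e^{2\wh\alpha(s-t)})/(2|\wh\alpha|)\le\beta^2/(2|\alpha_1|)=\iota_0^2$. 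Summing the two contributions yields exactly $\wt m(y)=\iota_0+|y|$.

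For \eqref{lemme:bound-EY-and-EG(t,y)-eq2}, the first inequality is just $|\int_t^T(\cdot)\d s|\le\int_t^T|\cdot|\d s$ followed by Fubini/Tonelli. For the pointwise bound on $\E_\zs{T_0}|\ov\eta_s^{\,t,0}|$ I would start at $0$ so the deterministic terms drop out, and write the difference of the two stochastic integrals (driven by the \emph{same} $\wt\U$) as
$$
\ov\eta_s^{\,t,0}=\int_t^s\beta\bigl(e^{\wh\alpha(s-v)}-e^{\alpha(s-v)}\bigr)\,\d\wt\U_v,
$$
then use Jensen and the Itô isometry again; the mean-value theorem gives $|e^{\wh\alpha u}-e^{\alpha u}|\le u\,|\wh\alpha-\alpha|$ (since the relevant exponential rates are negative, $|\tfrac{\d}{\d a}e^{au}|=u e^{au}\le u$), so after integrating $u\mapsto u$ over $[0,s-t]$ and taking the square root one gets a bound of the form $\mathrm{const}\cdot(s-t)\,|\wh\alpha-\alpha|$; integrating over $s\in[t,T]$ and crudely bounding by $\wt T$ produces the factor $\wt T/|\alpha_1|$ and the $\wt m(y)$ (which here one keeps in the statement for uniformity, even though at $y=0$ it is $\iota_0$). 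The only care needed is to keep the constants clean enough to land on the stated $\wt T\,\wt m(y)/|\alpha_1|$.

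For \eqref{lemme:bound-EY-and-EG(t,y)-eq3}, I would write $\wh\cG-\cG=\cG\,(e^{R}-1)$ with $R=\int_t^s\bigl(\wh Q(\wh\eta_u^{t,y})-Q(\eta_u^{t,y})\bigr)\d u$, bound $\cG\le e^{Q_*(T-t)}$, and use $|e^R-1|\le|R|\,e^{|R|}$; since $Q$ and $\wh Q$ differ only through the parameter and share the bound $Q_*$ with Lipschitz-in-$y$ constant $Q_1^*$, one gets $|R|\le Q_1^*\int_t^s|\wh\eta_u^{t,y}-\eta_u^{t,y}|\d u+(\text{term from }\wh Q-Q)$, take $\E_\zs{T_0}$, and invoke \eqref{lemme:bound-EY-and-EG(t,y)-eq2} to control $\E_\zs{T_0}\int_t^s|\ov\eta_u|\d u$ by $\wt T\,\wt m(y)|\wh\alpha-\alpha|/|\alpha_1|$. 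The main obstacle I anticipate is precisely this last point: one must be careful about whether the difference $\wh Q-Q$ evaluated at a \emph{common} argument contributes an extra term, and whether it too is $O(|\wh\alpha-\alpha|)$ with the right constant, or whether — as the clean statement suggests — the dependence of $Q$ on $\alpha$ drops out (note $Q(y)$ in \eqref{def:q*_Q} depends on $r,\gamma,\theta$ but not on $\alpha$, so $\wh Q=Q$ and only the process argument changes), in which case the estimate follows directly. Getting that bookkeeping right, and shepherding the constants $Q_1^*$, $e^{Q_*\wt T}$ through without slack, is the part that requires attention; the stochastic analysis itself is routine given the Gaussian structure.
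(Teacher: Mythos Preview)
Your treatment of \eqref{lemme:bound-EY-and-EG(t,y)-eq1} matches the paper's, and your closing observation that $\wh Q=Q$ (since $Q$ in \eqref{def:q*_Q} does not involve $\alpha$) is correct and is exactly what the paper uses. The difficulties are in \eqref{lemme:bound-EY-and-EG(t,y)-eq2} and the constant in \eqref{lemme:bound-EY-and-EG(t,y)-eq3}.

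For \eqref{lemme:bound-EY-and-EG(t,y)-eq2} there is both a misreading and a genuine gap. The superscript in $\ov\eta_s^{\,t,0}$ records that the \emph{difference} vanishes at $s=t$; the underlying processes $\wh\eta$ and $\eta$ both start at $y$, which is why the bound carries $\wt m(y)=\iota_0+|y|$ and why this estimate can feed into \eqref{lemme:bound-EY-and-EG(t,y)-eq3}. Your plan drops the deterministic contribution $y\bigl(e^{\wh\alpha(s-t)}-e^{\alpha(s-t)}\bigr)$. More seriously, the crude bound $|e^{\wh\alpha u}-e^{\alpha u}|\le u\,|\wh\alpha-\alpha|$ followed by It\^o isometry yields a constant of order $\beta\,\wt T^{3/2}$ pointwise and $\beta\,\wt T^{5/2}$ after the $\d s$ integration; the factor $1/|\alpha_1|$ simply does not appear. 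The paper obtains it by a different mechanism: write the SDE for the difference,
\[
\d\ov\eta_s=\alpha\,\ov\eta_s\,\d s+(\wh\alpha-\alpha)\,\wh\eta_s^{t,y}\,\d s,
\qquad \ov\eta_t=0,
\]
solve by variation of constants to get $\ov\eta_s=(\wh\alpha-\alpha)\int_t^s e^{\alpha(s-u)}\,\wh\eta_u^{t,y}\,\d u$, take absolute values and expectations, and invoke \eqref{lemme:bound-EY-and-EG(t,y)-eq1}:
\[
\E_\zs{T_0}|\ov\eta_s|\le |\wh\alpha-\alpha|\,\wt m(y)\int_t^s e^{\alpha(s-u)}\,\d u
\le \frac{\wt m(y)}{|\alpha_1|}\,|\wh\alpha-\alpha|.
\]
The $1/|\alpha_1|$ comes from integrating the exponential kernel, not from any crude $\wt T$ bound.

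For \eqref{lemme:bound-EY-and-EG(t,y)-eq3}, your decomposition $\wh\cG-\cG=\cG\,(e^{R}-1)$ with $|e^R-1|\le|R|\,e^{|R|}$ introduces an extra factor $e^{|R|}$ (as large as $e^{2Q_*(T-t)}$) that is not in the stated bound. The paper applies the mean value theorem directly to $|e^{A}-e^{B}|\le e^{\max(A,B)}|A-B|\le e^{Q_*(T-t)}|A-B|$ with $A=\int_t^s Q(\wh\eta_u)\,\d u$ and $B=\int_t^s Q(\eta_u)\,\d u$, then uses $|Q(\wh\eta_u)-Q(\eta_u)|\le Q_1^*\,|\ov\eta_u|$ together with the pointwise bound above. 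This lands exactly on $\wt T\,Q_1^*\,e^{Q_*(T-t)}\,\wt m(y)\,|\wh\alpha-\alpha|/|\alpha_1|$.
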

\begin{proof}
Since $\eta_s = \eta_t e^{\alpha \,(s-t)} + \int_t^s \beta e^{\alpha (s-v)} \, \d \wt \U_v$
 we obtain for any  $\alpha_2 \le \alpha \le \alpha_1<0$
\begin{eqnarray*}
 \E\,(\eta_s^{t,y})^2&=& y^2 e^{2 \alpha \,(s-t)}+\beta^2 \int_t^s e^{2 \alpha (t-v)} \, \d v \le y^2+\dfrac{\beta^2}{2 |\alpha_1|}\\
&\le& \left( |y|+\dfrac{\beta}{\sqrt{2 |\alpha_1|}}\right)^2 \,.
\end{eqnarray*}
This implies the bound \eqref{lemme:bound-EY-and-EG(t,y)-eq1}. Moreover, setting  $\ov \eta_s=\wh \eta_s^{t,y}- \eta_s^{t,y}$,
we obtain
$$
\d \ov \eta_s=(\wh \alpha \wh \eta_s^{t,y}-\alpha \eta_s^{t,y}) \, \d s
= \alpha \ov \eta_\zs{s}\, \d s +(\ov \alpha) \wh \eta_s^{t,y} \, \d s\,,
$$
i.e. $\ov \eta_s= \int_t^s \ov \alpha \, e^{\alpha \, (s-u)} \, \wh \eta_u^{t,y} \, \d u$. Therefore,
$$
|\ov \eta_s| \le |\ov \alpha| \int_t^s  |\wh \eta_u^{t,y}| e^{\alpha \, (s-u)} \, \d u\,.
$$
Since $\wh \alpha$ is independent of the Brownian motion $(\wt \U_t)$, we get
\begin{eqnarray}\label{eq:E_hate_eta-eta_inequality}
 \E_\zs{T_0}\,  |\ov \eta_s|
&\le&   |\ov \alpha|\, \E_\zs{T_0}  \int_t^s  |\wh \eta_u^{t,y}| \, e^{\alpha \, (s-u)} \, \d u \nonumber\\
&\le&  |\ov \alpha|  \int_t^s e^{\alpha \, (s-u)} \E_\zs{T_0}  |\wh \eta_u^{t,y}| \, \d u
\le   \,\frac{\wt m (y)}{|\alpha_1|}|\ov \alpha| \,.
\end{eqnarray}
Therefore, for all $ \, T_0\le t\le T$

\begin{eqnarray*}
\E_\zs{T_0}\int_t^T |\ov \eta_s| \, \d s
&\le& \E_\zs{T_0}\left( \int_t^T |\ov \alpha| \int_t^s  e^{\alpha (s-u)} |\wh \eta_u^{t,y}| \, \d u \, \d s\right)\\[2mm]
&\le& \wt T |\ov \alpha| \int_t^T e^{\alpha (s-u)}\,\E_\zs{T_0} |\wh \eta_u^{t,y}| \, \d u
\le \frac{\wt m (y) \wt T  }{|\alpha_1|}|\ov \alpha|
\end{eqnarray*}
and we come to \eqref{lemme:bound-EY-and-EG(t,y)-eq2}. To get inequality \eqref{lemme:bound-EY-and-EG(t,y)-eq3} note that
\begin{eqnarray*}
 |\wh \cG(t,s,y)-\cG(t,s,y)|
&\le& e^{Q_* (T-t)} \, |\int_{t}^{s} Q(\wh \eta_{u}^{t,y})\, \d u -\int_{t}^{s} Q(\eta_{u}^{t,y})\, \d u| \\
&\le& e^{Q_* (T-t)}  \, \int_{t}^{s} \sup_{y \in \bbr} |\frac{ \partial Q(y)}{\partial y}|\, |\ov \eta_{u}|\, \d u \\
&\le& Q_1^* e^{Q_* (T-t)} \int_{t}^T|\ov \eta_{u}|\, \d u\,.
\end{eqnarray*}
Thus,
$$
\E_\zs{T_0}  |\wh \cG(t,s,y)-\cG(t,s,y)|
\le
 Q_1^* e^{Q_* (T-t)} \int_{t}^T \E_\zs{T_0} |\ov \eta_{u}|\, \d u\,.
$$
Now, the bound \eqref{eq:E_hate_eta-eta_inequality} implies \eqref{lemme:bound-EY-and-EG(t,y)-eq3}.
Hence Lemma \ref{lemme:bound-EY-and-EG(t,y)}.
\end{proof}

\vspace{3mm}
\noindent We study in the next proposition the behavior of $h(t,y)$, the solution of the fixed point problem $h=\cL_h$, when using the estimate $\wh\alpha$ of the parameter $\alpha$. We look for a bound for the deviation
\begin{equation}\label{sec:A.nn-2}
\ov h(t,y)= \wh h(t,y)-h(t,y)\,,
\end{equation}
 where $\wh h=\wh{\cL}_{\wh h}$. The operator $\wh \cL$ is defined in \eqref{def:wh-L-theOperator}.
Similarly to \eqref{def:norme_in_cX} we define on $\cX$ the metric $ \wt \varrho_\zs{*}$ as follows:
\begin{equation}\label{def:metric_wt_varrho*}
\wt \varrho_\zs{*}(f,g)=\sup_\zs{(t,y)\in \cK} \,e^{-\varkappa(T-t)}\,\frac{|f(t,y)-g(t,y)|}{\iota_0+|y|} \,,
\end{equation}
where we set $\iota_0=\beta/\sqrt{2 \, \alpha_1}$ and $\varkappa=Q_*+\zeta+1$ and set $\zeta=\zeta_0+2\gamma$ for some $\zeta_0 >0$.
\vspace{2mm}
\begin{proposition}\label{prop:h-hate-converges-to-h}
For the  known $\mu$ and unknown $\alpha$, and for any  $0<T_0 <T$
\begin{equation}\label{eq:varrho_h-h}
\wt \varrho_*(\wh h,h)\le \h_{1}\,\,|\ov \alpha|\,,
\end{equation}
where $\h_{1}$ is given in
\eqref{def:h_T-Gamma_T}.
\end{proposition}

\begin{proof} We use the definition of the operator $\cL$ in \eqref{def:L-theOperator}:
$$
h(t,y)=\cL_h(t,y)=\E\,\cG(t,T,y)
+\frac{1}{q_\zs{*}}
\int_{t}^{T}
\cH_\zs{h}(t,s,y)
\,\d s\,.
$$
Through  \eqref{def:Hf(t,s,y)} we can estimate the deviation
 \eqref{sec:A.nn-2} as
$$
 |\ov h(t,y)|
 \le
 \E_\zs{T_0} \,| \wh \cG(t,T,y)-\cG(t,T,y)| +I(\wh \alpha)\,,
$$
where
$$
I(\wh \alpha )=
\frac{1}{q_\zs{*}}
\int_{t}^{T} \E_\zs{T_0} \,
|\left(\wh h(s,\wh \eta^{t,y}_{s})\right)^{1-q_\zs{*}}
\wh \cG(t,s,y)\,-\left(h(s,\eta^{t,y}_{s})\right)^{1-q_\zs{*}}
\cG(t,s,y)|
\,\d s\,.
$$
Moreover, this term can be bounded as
\begin{eqnarray*}
I(\wh \alpha )&\le& \frac{1}{q_\zs{*}}
\int_{t}^{T} \E_\zs{T_0} \, (h(s,\eta^{t,y}_{s})^{1-q_\zs{*}}\,
|\wh \cG(t,s,y)-\cG(t,s,y)|\,\d s\,\\
&+& \frac{1}{q_\zs{*}}
\int_{t}^{T} \E_\zs{T_0} \,
|\left(\wh h(s,\wh \eta^{t,y}_{s})\right)^{1-q_\zs{*}}
-\left(h(s,\eta^{t,y}_{s})\right)^{1-q_\zs{*}}
| e^{Q_* (s-t)}
\,\d s\,\\
&\le&
\int_{t}^{T} \E_\zs{T_0} \,
|\wh \cG(t,s,y)-\cG(t,s,y)|\,\d s\,
+\frac{|1-q_*|}{q_\zs{*}}
\int_{t}^{T} \E_\zs{T_0} \,|\wh h(s,\wh \eta^{t,y}_{s})
-h(s,\eta^{t,y}_{s})| e^{Q_* (s-t)}\,\d s\,.
\end{eqnarray*}
We use the fact that $q_*=1/(1-\gamma)>1$ and the bounds \eqref{lemme:bound-EY-and-EG(t,y)-eq3} and \eqref{eq:E_hate_eta-eta_inequality} to deduce
\begin{eqnarray*}
|\ov h(t,y)|&\le&(1+\wt T)\E_\zs{T_0} \,| \wh \cG(t,T,y)-\cG(t,T,y)|
+\gamma \,\int_{t}^{T} \E_\zs{T_0} \,|h(s,\wh\eta^{t,y}_{s})
-h(s, \eta^{t,y}_{s})| e^{Q_* (s-t)}\,\d s\\
&+&\gamma \,
\int_{t}^{T} \E_\zs{T_0} \,|\wh h(s,\wh \eta^{t,y}_{s})
-h(s,\wh \eta^{t,y}_{s})| e^{Q_* (s-t)}\,\d s\,.
\end{eqnarray*}
The upper bound \eqref{sec:A.nn-1} yields
$$
|h(s,\wh\eta^{t,y}_{s})-h(s, \eta^{t,y}_{s})| \le \, h_1^* \,\,  |\wh\eta^{t,y}_{s}- \eta^{t,y}_{s}|\,.
$$
In view of the definitions of the metric $\wt \varrho_*$ in \eqref{def:metric_wt_varrho*} and of the parameter $\varkappa$ in \eqref{def:norme_in_cX}
 we get
\begin{eqnarray*}
\wt \varrho_*(\wh h,h)
&\le&\frac{(1+\wt T)\wt T\, Q^*_1}{|\alpha_1|}\,\sup_\zs{(t,y)\in \cK}  \left( \frac{\wt m(y)}{\iota_0+|y|}e^{(Q_*-\varkappa)(T-t)}\right) |\wh\alpha-\alpha| \\
&+& \gamma \,\sup_\zs{(t,y)\in \cK}\int_{t}^{T} \, h_1^* \,\E_\zs{T_0} |\wh \eta^{t,y}_s-\eta^{t,y}_s|\,e^{(Q_*-\varkappa) (T-t)}\,\d s \\
&+&\gamma \,\sup_\zs{(t,y)\in \cK}\int_{t}^{T} \E_\zs{T_0} \,\frac{|\wh h(s,\wh \eta^{t,y}_{s})-h(s,\wh \eta^{t,y}_{s})| e^{-\varkappa(T-s)} }{\iota_0+|\wh \eta^{t,y}_{s} |} \frac{\iota_0+|\wh \eta^{t,y}_{s} |}{\iota_0+|y|}e^{(Q_*-\varkappa) (s-t)}\,\d s\,.
\end{eqnarray*}
Then
\begin{eqnarray*}
\wt \varrho_*(\wh h,h)
&\le&
\Upsilon^{*}\, \,|\wh\alpha-\alpha| +\gamma  \,\wt\varrho_*(\wh h,h)\,
\sup_\zs{(t,y)\in \cK} \int_{t}^{T}  \,\frac{\iota_0+\E_\zs{T_0}|\wh \eta^{t,y}_{s} |}{\iota_0+|y|}e^{(Q_*-\varkappa) (s-t)}\,\d s\\
&\le&
\Upsilon^{*}\,\, |\wh\alpha-\alpha| +\gamma \,\wt \varrho_*(\wh h,h)\,
\sup_\zs{(t,y)\in \cK} \left( \frac{\iota_0+\wt m(y)}{\iota_0+|y|}\int_{t}^{T}   \,e^{(Q_*-\varkappa) (s-t)}\,\d s \right)\\
&\le&
\Upsilon^{*}\,\, |\wh\alpha-\alpha| +\,
\frac{2\,\gamma }{\varkappa-Q_*}\, \wt \varrho_*(\wh h,h)\,.
\end{eqnarray*}
Here $\Upsilon^{*}= \left(2\, Q_1^*\,\wt T  + \gamma\, h_1^* \, \right)\,\wt T /|\alpha_1|$.
Hence we get
$$
\wt \varrho_*(\wh h,h)\le \frac{\varkappa-Q_*}{\varkappa-Q_*-2\, \gamma}\Upsilon^{*}\,\vert \ov \alpha\vert\,.
$$
Taking into account  that $\varkappa=Q_*+\zeta_0+2\gamma+1$,
 we obtain \eqref{eq:varrho_h-h}.  Hence Proposition \ref{prop:h-hate-converges-to-h}.
\end{proof}

\vspace{5mm}
\noindent We consider both the stock price appreciation rate $\mu \in [\mu_1,\mu_2]$, and the drift $\alpha$ of the economic factor $Y$ to be unknown. The next lemma gives the analogous of equation \eqref{lemme:bound-EY-and-EG(t,y)-eq3}.

\begin{lemma}\label{lemme:bound-EY-and-EG(t,y)-2}
For the unknown parameters  $\mu$ and $\alpha$ and for any  $0<T_0 <T$
\begin{equation}\label{eq:bound-EY-and-EG(t,y)-2}
 \E_\zs{T_0}|\wh \cG(t,s,y)-\cG(t,s,y)|\le \gamma \dfrac{(\mu_2+r+1) }{(1-\gamma) \sigma_1^2}\, \wt T e^{ Q_* (T-t)}  \, \varpi
  + \wt T Q_1^* e^{Q_* (T-t)}  \, \frac{ \wt m(y)  }{|\alpha_1|}\,|\ov \alpha|\,,
\end{equation}
where $\varpi=|\wh \mu-\mu|+|\wh \mu-\mu|^2$.
\end{lemma}
\begin{proof}
First note that for the function $Q$ defined in \eqref{def:q*_Q} we can obtain the following bound
$$
 \wh Q(z)-Q(z)=\frac{\gamma \left(\wh \theta^2(z)-\theta^2(z) \right)}{2(1-\gamma)}
\le
\frac{\gamma (\mu_2+r+1) }{(1-\gamma) \sigma_1^2}\,\varpi\,.
$$
So,
 \begin{align*}
 |\wh \cG(t,s,y)-\cG(t,s,y)|
 &\le \left|\exp \left( \int_{t}^{s} \wh Q(\wh \eta_{u}^{t,y})\, \d u\right)-\exp \left( \int_{t}^{s} Q(\wh \eta_{u}^{t,y})\, \d u\right) \right|\\[2mm]
&+
e^{Q_* (T-t)} \, \left|\int_{t}^{s} Q(\wh \eta_{u}^{t,y})\, \d u -\int_{t}^{s} Q(\eta_{u}^{t,y})\, \d u \right| \\[2mm]
&\le \wt T e^{ Q_* (T-t)} \dfrac{\gamma (\mu_2+r+1) }{(1-\gamma) \sigma_1^2}\,\varpi + Q_1^* e^{Q_* (T-t)} \int_t^T \,|\wh \eta_{u}^{t,y}-\eta_{u}^{t,y}|\, \d u\,.
\end{align*}
Through \eqref{lemme:bound-EY-and-EG(t,y)-eq2} we obtain \eqref{eq:bound-EY-and-EG(t,y)-2}. Hence, Lemma \ref{eq:bound-EY-and-EG(t,y)-2}.
\end{proof}

\vspace{2mm}
The next proposition is the analogous of Proposition \ref{prop:h-hate-converges-to-h}. The difference is that, in the proposition bellow, both $\mu$ and $\alpha$ are unknown.
\begin{proposition}\label{prop:h-hate-converges-to-h-2}
 For the unknown parameters $\mu$ and $\alpha$and for any  $0< T_0 <T$
$$
\wt \varrho_*( \wh h,h) \le \h_{1}\, |\ov \alpha|+ \h_2 \, \varpi\,,
$$
where the metric $\wt \varrho_*$ is given in \eqref{def:metric_wt_varrho*},  $\h_{1}$ and  $\h_{2}$  are
given in \eqref{def:h_T-Gamma_T}
and in \eqref{def:h2} respectively.
\end{proposition}

\begin{proof}
We follow the same arguments as in the proof of Proposition \ref{prop:h-hate-converges-to-h}
and use Lemma \ref{lemme:bound-EY-and-EG(t,y)-2} for the bound of $\E_\zs{T_0}\,| \wh \cG(t,T,y)-\cG(t,T,y)|$.
\end{proof}

\vspace{20mm}


\bibliographystyle{plain}
\bibliography{Biblio_BePe-seq}

\begin{thebibliography}{10}

\bibitem{BerdjanePergamen2012}
B.~Berdjane and S.~Pergamenshchikov.
\newblock Optimal consumption and investment for markets with random
  coefficients.
\newblock {\em Finance and stochastics}, 17(2):419--446, 2013.

\bibitem{CastanedaLeyvaHernandezHernandez2005}
N.~Castaneda-Leyva and D.~Hern\'{a}ndez-Hern\'{a}ndez.
\newblock Optimal consumption investment problems in incomplete markets with
  stochastic coefficients.
\newblock {\em SIAM, J. Control and Opt}, 44:1322--1344, 2005.

\bibitem{Delarue2002}
F.~Delarue.
\newblock On the existence and uniqueness of solutions to \textsc{FBSDE}s in a
  non-degenerate case.
\newblock {\em Stochastic Processes and their Applications}, 99(2):209 -- 286,
  2002.

\bibitem{DelongKluppelberg2008}
L.~Delong and C.Kl\"{u}ppelberg.
\newblock Optimal investment and consumption in a black-scholes market with
  l\'evy-driven stochastic coefficients.
\newblock {\em Annals of Applied Probability}, 18(3):879--908, 2008.

\bibitem{FlemingHernandezHernandez2003}
W.~Fleming and D.~Hern\'{a}ndez-Hern\'{a}ndez.
\newblock An optimal consumption model with stochastic volatility.
\newblock {\em Finance and Stochastics}, 7:245--262, 2003.

\bibitem{FlemingRishel1975}
W.~Fleming and R.~Rishel.
\newblock {\em Deterministic and stochastic optimal control}.
\newblock Applications of Mathematics, 1, Springer-Verlag, Berlin New York,
  1975.

\bibitem{FouquePapanicoloauSircar2000}
J.P. Fouque, G.~Papanicoloau, and R.~Sircar.
\newblock {\em Derivatives in Financial Markets with Stochastic Volatility}.
\newblock Cambridge University Press, Cambridge, 2000.

\bibitem{Freidlin1985}
M.I Freidlin.
\newblock {\em Functional Integration and Partial Differential Equations}.
\newblock Annals of Mathematics Studies, 1, Springer-Verlag, 1985.

\bibitem{HernandezHernandezShied2006}
D.~Hern\'{a}ndez-Hern\'{a}ndez and A.~Schied.
\newblock Robust utility maximization in a stochastic factor model.
\newblock {\em Statistics and Decisions}, 24(1):109--125, 2006.

\bibitem{JackwerthRubinstein1996}
J.~Jackwerth and M.~Rubinstein.
\newblock Recovring probability distributions from contemporaneous sequirity
  prices.
\newblock {\em J. Finance}, 40:455--480, 1996.

\bibitem{KabanovPergamenshchikov2003}
Yu.~M. Kabanov and S.~M. Pergamenshchikov.
\newblock {\em Two-Scale Stochastic Systems. Asymptotic Analysis and Control}.
\newblock Applications of mathematics. Stochastic modelling and applied
  probability, Springer-Verlag, Berlin Heidelberg New York, 2003.

\bibitem{KaratzasShreve1991}
I.~Karatzas and S.~E. Shreve.
\newblock {\em Brownian Motion and Stochastic Calcul}.
\newblock Springer, New York, 1991.

\bibitem{KaratzasShreve1998}
I.~Karatzas and S.E. Shreve.
\newblock {\em Methods of Mathematical finance}.
\newblock Springer, Berlin, 1998.

\bibitem{KluppelbergPergamenchtchikov2009}
C.~Kl\"{u}ppelberg and S.~M. Pergamenchtchikov.
\newblock Optimal consumption and investment with bounded downside risk for
  power utility functions.
\newblock In F.~Delbaen, M.~R\'asonyi, and C.~Stricker, editors, {\em
  Optimality and Risk: Modern Trends in Mathematical Finance. The Kabanov
  Festschrift}, pages 133--169. Springer, Heidelberg-Dordrecht-London-New York,
  2009.

\bibitem{PergamenKonev92}
V.~Konev and S.~Pergamenshchikov.
\newblock Estimation of the parameters of diffusion processes.
\newblock {\em Methods of Economical Analysis}, pages 3--31, 1992.

\bibitem{Korn1997}
R.~Korn.
\newblock {\em Optimal portfolios}.
\newblock World Scientific, Singapore, 1997.

\bibitem{KraftSteffensen2006}
H.~Kraft and M.~Steffensen.
\newblock Portfolio problems stopping at first hitting time with application to
  default risk.
\newblock {\em Math. Meth. Oper. Res}, 63:123--150, 2006.

\bibitem{LadyzenskajaSolonnikovUralceva1967}
O.A. Lady$\check{z}$enskaja, V.A. Solonnikov, and N.N. Ural'ceva.
\newblock {\em Linear and quasilinear equations of parabolic type (Translated
  from the Russian)}.
\newblock Translations of Mathematical Monographs, Vol. 23 American
  Mathematical Society, Providence, R.I, 1988.

\bibitem{LiptserShiryaev2000-I}
R.S. Liptser and A.N. Shiryaev.
\newblock {\em Statistics of Random Process I. General Theory}.
\newblock Springer-Verlag Berlin and Heidelberg Gmb H \& Co, Berlin, 2nd
  revised edition, 2000.

\bibitem{LiptserShiryaev2000-II}
R.S. Liptser and A.N. Shiryaev.
\newblock {\em Statistics of Random Process II. Applications}.
\newblock Springer-Verlag Berlin and Heidelberg Gmb H \& Co, Berlin, 2nd
  revised edition, 2000.

\bibitem{BardiMartinoCesaroni2010}
L.~Manca M.~Bardi, A.~Cesaroni.
\newblock Convergence by viscosity methods in multiscale financial models with
  stochastic volatility.
\newblock {\em SIAM J. Financial Math}, 1:230--265, 2010.

\bibitem{MaProtterYong1994}
J.~Ma, P.~Protter, and J.~Yong.
\newblock Solving forward-backward stochastic differential equations explicitly
  — a four step scheme.
\newblock {\em Probability Theory and Related Fields}, 98(3):339--359, 1994.

\bibitem{Merton1971}
R.~Merton.
\newblock Optimal consumption and portfolio rules in a continuous time model.
\newblock {\em Journal of Economic Theory}, 3:373--413, 1971.

\bibitem{Novikov1971}
A.~A Novikov.
\newblock Sequential estimation of the parameters of the diffusion processes.
\newblock {\em Theory Probability and Appl}, pages 394--396, 1971.

\bibitem{Pazy1983}
A.~Pazy.
\newblock {\em Functional Integration and Partial Differential Equations}.
\newblock Applied Mathematical Sciences, 44, Springer New York, 1983.

\bibitem{Pham2002}
H.~Pham.
\newblock Smooth solutions to optimal investment models with stochastic
  volatilities and portofolio constraints.
\newblock {\em Appl. Math. Optim}, 46:55--78, 2002.

\bibitem{PresmanSethi1991}
E.~L. Presman and S.~P. Sethi.
\newblock Risk-aversion behavior in consumption/investment problems.
\newblock {\em Math. Finance}, 1(1):101--124, 1991.

\bibitem{Rogers2013}
L.C.G Rogers.
\newblock {\em Optimal Investment}.
\newblock SpringerBriefs in Quantitative Finance, Springer-Verlag, 2013.

\bibitem{Rubinstein1985}
M.~Rubinstein.
\newblock Nonparametric tests of alternative option pricing models.
\newblock {\em J. Finance}, 51:1611--1631, 1985.

\bibitem{SethiTaksarPresman1992}
M.~I.~Taksar S.~P.~Sethi and E.~L. Presman.
\newblock Explicit solution of a general consumption/portfolio problem with
  subsistence consumption and bankruptcy.
\newblock {\em Journal of Economic Dynamic and Control}, 16:747--768, 1992.

\bibitem{SircarZariphopoulou2005}
R.~Sircar and T.~Zariphopoulou.
\newblock Bounds and asymptotic approximations for utility prices when
  volatility is random.
\newblock {\em SIAM Journal on Control and Optimization (SICON)},
  43(4):1328--1353, 2005.

\bibitem{Zariphopoulou2001}
T.~Zariphopoulou.
\newblock A solution approach to valuation with unhedgeable risk.
\newblock {\em Finance and stochastics}, 5:61--82, 2001.

\end{thebibliography}
\nocite{KaratzasShreve1991}
\nocite{SircarZariphopoulou2005}
\nocite{BardiMartinoCesaroni2010}
\end{document}